\newtheorem{theorem}{Theorem}[section]
\newtheorem{lemma}[theorem]{Lemma}
\newtheorem{claim}[theorem]{Claim}
\newtheorem{assumption}[theorem]{Assumption}
\newtheorem{definition}[theorem]{Definition}
\newtheorem{example}{Example}
\newcommand{\QED}{\hfill~\rule[-1pt]{6pt}{6pt}}
\newenvironment{proof}{{\noindent \bf Proof \ }}{\QED}
\renewcommand{\hat}{\widehat}
\newcommand{\beq}{\begin{equation}}
\newcommand{\eeq}{\end{equation}}
\newcommand{\re}{\hbox{\rm I\kern -.2em R}}
\newcommand{\real}{\hbox{\rm I\kern -.2em R}}
\newcommand{\integer}{\hbox{\rm Z\kern -.4em Z}}
\newcommand{\complex}{\hbox{\rm I\kern -.5em C}}
\newcommand{\id}{\hbox{\rm 1\kern -.25em I}}
\def\R{{\cal R }}
\def\S{{\cal S }}
\def\beqn{\begin{eqnarray}}
\def\eeqn{\end{eqnarray}}
\def\beqnn{\begin{eqnarray*}}
\def\eeqnn{\end{eqnarray*}}
\title{Robustness  Analysis  for  Battery Supported Cyber-Physical Systems}
\author{\small FUMIN ZHANG, ZHENWU SHI, and SHAYOK MUKHOPADHYAY \\
 Georgia Institute of Technology}
\begin{document}
\date{}
\maketitle

\footnotetext[1]{This paper has been accepted by ACM Transactions in Embedded Computing Systems (TECS) in October, 2011.}
\footnotetext[2]{This research was partially supported by the ONR grants N00014-08-1-1007, N00014-09-1-1074, and N00014-10-10712(YIP), and NSF grants ECCS-0841195, ECCS-0845333(CAREER) and CNS-0931576. Author’s address: Fumin Zhang, Zhenwu Shi and Shayok Mukhopadhyay, email: \{fumin, zwshi and shayok\}@gatech.edu.}

\noindent {\bf \Large Abstract}\\

\vspace{-0.5mm}
\noindent This paper establishes a novel analytical approach to quantify robustness of scheduling and battery management for battery supported cyber-physical systems. A dynamic schedulability test is introduced to determine whether tasks are schedulable within a finite time window. The test is used to measure robustness of  a real-time scheduling algorithm by evaluating the strength of computing time perturbations  that break schedulability at runtime. Robustness of battery management is quantified analytically by an adaptive threshold on the state of charge. The adaptive threshold significantly reduces the false alarm rate for battery management algorithms to decide when a battery needs to be replaced.\\

{\small \noindent Categories and Subject Descriptors: C.3 [{\bf Special-Purpose and Application-Based Systems}]: {\it Real-time and embedded systems} ; D.4.1 [{\bf Operating System}]: Process Management--{\it Scheduling}; G.4 [{\bf Mathematical Software}] }\\

\vspace{-3mm}
{\small \noindent General Terms: Algorithms, Design, Performance, Reliability, Management, Theory}\\

\vspace{-3mm}
{\small \noindent {Additional Key Words and Phrases:} {Cyber-physical systems, battery management,  dynamic timing model, dynamic schedulability test} }\\

%\category{C.3}{Special-Purpose and Application-Based Systems}{}[Real-time and embedded systems]
%\category{D.4.1}{Operating System}{Process Management}[Scheduling]
%\category{G.4} {Mathematical Software}{}
%
%\terms{Algorithms, Design, Performance, Reliability, Management, Theory, Verification}
%

%
%
%
%
%
%\begin{bottomstuff}
%This research was partially supported by the ONR grants N00014-08-1-1007, N00014-09-1-1074, and N00014-10-10712(YIP), and NSF grants ECCS-0841195, ECCS-0845333(CAREER) and CNS-0931576.\newline
%Author's address: Fumin Zhang, Zhenwu Shi and Shayok Mukhopadhyay,  email: \{fumin,~zwshi and ~shayok\}@gatech.edu.
%\end{bottomstuff}

\section{Introduction}
Cyber physical systems (CPS) theory represents a novel research direction aiming to establish foundations for a tight integration of computing and physical processes \cite{ShaCPS09,WolfCPS09,LeeCPS08}.  CPS research unifies domain specific design methods for   subsystems to achieve desirable overall performance of the entire system.
We are interested in battery supported  CPS (CPSb) where control of physical systems and the underlying  computing activities are confined by battery capacity, such as  mobile devices. In CPSb, the battery, the actuators and the sensors can be viewed as physical components, while the embedded computers can be viewed as cyber components.
The cyber and the physical components interact with each other so that no complete understanding can be gained by studying any component alone. The total discharge currents from the battery include currents drawn from all cyber and physical components as  results of the interactions between these components. In order to estimate the remaining capacity of the battery or predict the remaining battery life, knowledge of the interactions among all cyber-physical components  are necessary.

CPSb can be tested and verified using computer simulation tools that simulate all its components. Intensive simulations at the design phase usually  achieve tolerance of perturbations that can be predicted. Prototypes of CPSb can then be verified using experiments.  Exhaustive simulations and experiments are usually labor intensive and costly. Simpler yet less expensive approaches are desirable.

We propose an analytical approach to study CPSb. The analytical approach combines simplified mathematical models that capture the characteristic behaviors of each component of a CPSb. This approach is approximate in its nature. But since all CPSb components are modeled uniformly with mathematical equations, interactions between the CPSb components are naturally described as coupling terms between the mathematical models.  Hence the analytical approach is well suited for gaining insight into the interactions among the CPSb components.  Furthermore,  mathematical insights into CPSb are greatly appreciated when perturbations unpredictable at the design phase may force the systems  to work in conditions that are near or beyond the design envelopes where reliability becomes less guaranteed.

 In this paper, we follow an analytical approach to develop  mathematical tools to measure  robustness of real-time scheduling algorithms and battery management algorithms for CPSb during  runtime.  The mathematical tools produce exact solutions in terms of mathematical formulas to describe the interactions between embedded computers and batteries, which are complementary to results obtained  using simulation or experimental methods. In the rest of the introduction, we  briefly review some background knowledge from literature that is closely related to our work, followed by the research problems addressed and the contributions made by this paper.

 \subsection{Literature Review}
An important branch of  real-time systems research is to study schedulabilty. It tries to ascertain whether a set of real-time tasks can be computed by a processor under proper scheduling.
 The study of utilization based schedulability tests can be traced back to the rate monotonic scheduling (RMS) and earliest deadline first scheduling (EDF) \cite{liulayland73}. It has been shown that if a set of real-time tasks fall below a utilization bound, then they will be schedulabe.  Since then, extensive research has been conducted on periodic tasks to improve the utilization bounds \cite{Lehoczky_1989,AKMok_1991,Bini03} or to relax  assumptions \cite{Lehoczky_arbitrarydeadline,ABurns_offset} that are used to derive these bounds. Some important utilization bounds for non-periodic systems are also derived in \cite{Abdelzaher04}.  Schedulability tests based on utilization bounds are easy to compute. Therefore, they are often used during runtime (online), but are constrained by limited computational power. Schedulability tests based on utilization bounds are typically conservative because they can fail on schedulable task sets. This drawback  leads to exact schedulability tests \cite{Audsley_RTA,Lehoczky_1989,Pandya_RTA}. Some recent advancements have been reported  on exact schedulability tests \cite{Andersson_rtas,Fenxiang} with improved computational efficiency.

Robustness is well studied for feedback control systems and has seen successful applications \cite{kz}.
For real-time scheduling, robustness is introduced as a measure of the tolerance of a scheduling algorithm to variations in computing time e.g. perturbations \cite{Stankovic_MSFreport,Regehr_rtss02,Iainone}.  These works measure robustness by using a scaling factor (greater than one) for computing times that are long enough to cause a loss of schedulability. The robustness measure is computed using the binary search method, which  limits it to non-periodic tasks. Based on this notion of robustness, the method of elastic scheduling \cite{Buttazzo02,HuLemmon06} adjusts the periods of tasks to accommodate runtime perturbations.

Prediction of the state of charge (SoC, or the remaining battery capacity) is a basic function for all battery management algorithms \cite{rvw_model}.  A dynamic nonlinear battery model \cite{rincon} and a particle filter will be used to predict the SoC in this paper.  Different scheduling and control methods result in different ``load profiles" that affect the operational
life of a battery, hence various battery management algorithms are proposed \cite{Rakhmatov2,kim_rtas} to adjust the scheduling and control to prolong battery life.  These previous results usually rely on optimization methods.

\subsection{Research Problems and Contributions}

We provide robustness analysis for CPSb by measuring robustness of  both  real-time scheduling and battery management algorithms. Two types of perturbations are studied in this paper: perturbations to the computing times of real-time tasks, and perturbations to the SoC and parameters of batteries. The perturbations to the computing times may extend or shorten the time spent to compute real-time tasks. The perturbations to the SoC may increase or decrease the SoC. We assume that these perturbations have not been accounted for at the design stage, but have to be tolerated at runtime.
\begin{itemize}
\item {\bf How is robustness measured?} Robustness of a real-time scheduling algorithm is measured as the maximum strength of perturbations on the computing times of scheduled tasks that will not cause loss of schedulability.
 Robustness of a battery management algorithm is measured by its ability to trigger the switching of a used battery out of the system before the SoC of the battery drops below a threshold that indicates instability, even under perturbations to the SoC and battery parameters.

\item {\bf What methods are developed to study robustness of real-time scheduling algorithms?} We first developed a new mathematical model for the scheduled behaviors of real-time tasks.  We then study schedulability of  these tasks within a receding finite time window, and devise a dynamic schedulability test to give sufficient and necessary conditions for schedulability of acyclic task sets (e.g. tasks that are not necessarily periodic) under any priority based scheduling algorithm.
  The maximum strength of the perturbations that will not break schedulability  can then be determined analytically. This tolerable strength of the perturbations provides a measure for  robustness of the scheduling algorithm employed.

\item{\bf What methods are developed to study robustness of battery management algorithms?} The mathematical models of  real-time scheduling  are combined with  the controllers developed in our previous work \cite{ZWS_RTSS08} to generate predictions for the total battery discharge current. This prediction is then used to predict the SoC of batteries analytically at runtime.
 Due to nonlinearities inherent in battery behaviors, we introduce a measure for the robustness of battery management algorithms based on Lyapunov stability criteria \cite{Khalil}.  We then introduce an adaptive battery switching algorithm based on the Lyapunov stability test to determine when used battery should be replaced.

\item{\bf What are the contributions for CPS?} We have developed unified mathematical models for real-time scheduling in embedded computers that form the cyber components of CPSb, and for the discharging of batteries that form the physical components of CPSb. These mathematical models are also integrated with the feedback controller  developed  in our previous work \cite{ZWS_RTSS08}.  By combining these mathematical models, we are able to study the interactions between the cyber and physical components analytically, this is well aligned with the main theme of CPS research. Several benefits have been generated by this analytical approach:
\begin{itemize}
\item  Our robustness analysis incorporates both real-time scheduling and battery management algorithms. These results have not been reported in literature. The robustness measures are able to account for situations at runtime that are unexpected at the design stage.
\item The dynamic schedulability test is an exact schedulability test for non-periodic task sets. We have also generalized the notion of robustness from periodic task sets to non-periodic task sets. These results are novel and complementary in comparison to the literature reviewed.
\item Compared to existing battery management algorithms that use fixed thresholding for output voltage or for SoC \cite{BattHandBook,kim_rtas,philipsbook} to determine when to replace a used battery, our adaptive battery switching algorithm effectively reduces the false alarm rate.
\end{itemize}

\end{itemize}

The paper is organized as follows. Section \ref{section:robustness} discusses robustness of real-time scheduling algorithms. Section \ref{batteryman} studies robustness for battery management algorithms. Section \ref{sec:application} demonstrates the applications of the mathematical tools developed in this paper to a typical
CPSb. Section \ref{con1} provides summary and conclusions.

\section{Robustness of  Real-time Scheduling Algorithms}\label{section:robustness}
A real-time scheduling algorithm assigns priorities to a set of real-time tasks so that
all tasks can be computed on time on a processor.
At the design phase of a  real-time system, the parameters of tasks, such as computing times and deadlines,  are usually determined based on desired performance and experimental data.  We call these parameters the nominal characteristics. During runtime, the actual computing times and deadlines  may deviate from the nominal values due to variations in the software, hardware, and the environment. These deviations are usually considered as online perturbations.
For perturbations that can be predicted at the design phase, such as changes in task modes, the ``design-of-experiments" method may be applied to verify whether a scheduling algorithm can
tolerate such perturbations \cite{Iainone,Iaintwo}. Usually there exist online perturbations that may be difficult to predict at the design stage, such as the transient overload of certain tasks and the arriving of unexpected tasks.  In this section, we introduce mathematical tools to measure tolerance of a real-time scheduling algorithm to online perturbations.

Perturbations occurring online can change timing of the real-time tasks. It can cause a set of schedulable tasks to become unschedulable. Thus it is necessary to introduce a way to evaluate the schedulability during runtime as follows:
\begin{definition}
A \emph{dynamic schedulability test} over a  time interval $[t_a, t_b]$ checks if all task instances are able to meet their deadlines within $[t_a, t_b]$.
\end{definition}
As the starting time $t_a$ increases, the time interval $[t_a, t_b]$ will slide forward.
The length of the interval $(t_b-t_a)$ depends on how confident we are to  predict  the actual characteristics of the  real-time tasks to perform the schedulability test.  All mathematical tools developed in this section are
centered around the dynamic schedulability test within the time interval $[t_a, t_b]$.

\subsection{A Task Model}

 For theoretical rigor, let us define the task set that will be scheduled, which will include both periodic and aperiodic (non periodic) tasks.
We consider a task set $\Gamma$ of $N$ independent  hard real-time tasks $\Gamma=\{\tau_1, \tau_2, \dotsb, \tau_N\}$ running on a single processor. Let $\tau_n$ be any task in $\Gamma$. Each task in $\Gamma$ consists of an infinite sequence of instances.  We use  the notation $\tau_n^{k}$ to represent the $k$-th  instance of task $\tau_n$. The instance $\tau_n^{k}$ is characterized by its time of arrival  $a^{k}_{n}$, its computing time $C^{k}_{n}$ and its relative deadline $T^{k}_{n}$ measured from its time of arrival. The absolute deadline of  $\tau_n^{k}$ is then defined as $a^{k}_{n}+T^{k}_{n}$.

For theoretical rigor,  we make all tasks in the task set $\Gamma$ {\it acyclic} (\cite{Abdelzaher04}) as defined befow:
\begin{definition} \label{definition:acyclictask}
A task $\tau_n$ is  \emph{acyclic}  if and only if $\tau_n$ satisfies the following properties:
\begin{enumerate}
\item different instances of $\tau_n$ are allowed to have different computing times and different relative deadlines, as long as $0 \le C_{n}^{k} \le T_{n}^{k}$ and $T_{n}^{k}>0$ for all $k$;
\item the time of arrival of a new task instance coincides with the absolute deadline of the previous task instance of the same task, i.e. $a_n^{k+1}=a_n^k+T_n^k$  for all $k$.
\end{enumerate}
\end{definition}

Figure \ref{fig:acyclictask} demonstrates an acyclic task. The horizontal line represents the progression of time. The upward arrows represent the times of arrival of new task instances, and the rectangles represent the computation of task instances. The computing times  and the relative deadlines are also marked. These plotting conventions will be followed by other figures in Section \ref{section:robustness}.
\begin{figure}[tp]
\begin{center}
\includegraphics[width=0.70 \textwidth]{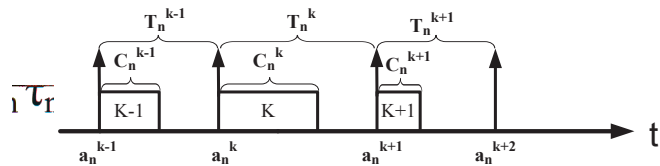}
\caption{Illustration of one acyclic task scheduled on a processor. Three task instances indexed by $k-1$, $k$, and $k+1$ are plotted.}  \label{fig:acyclictask}
\end{center}
\end{figure}

We use the acyclic task model because it  is universal: {(1)} any periodic task can be represented by an equivalent acyclic task. For example, a periodic task with computing time $2$ and period $5$ can be represented by an acyclic task with $C_{n}^{k}=2$ and $T_{n}^{k}=5$ for all $k$; {(2)} any set of non periodic tasks, i.e. tasks with irregular arriving instances, can be represented by an equivalent set of acyclic tasks \cite{Abdelzaher04}.

We want to model the scheduled behaviors of the real-time tasks at any time $t$. Some new notations that are only slightly different from the classical notations for acyclic tasks are necessary.
\begin{definition} \label{definition:effective}
At any time $t$, an instance of $\tau_n$ is \emph{effective} if and only if it has arrived before time $t$ but has not expired, i.e., $\tau_n^{k}$ is effective at time $t$ if and only if
\begin{equation}
a_n^{k}\leq t< a_n^{k}+T_n^{k}.
\end{equation}
\end{definition}

\begin{definition}
At any time $t$, $C_n(t)$ is defined as \emph{the computing time} of the effective instance of $\tau_n$ and $T_n(t)$ is defined as \emph{the relative deadline} of the effective instance of $\tau_n$, i.e.
\begin{equation} \label{equation:CT}
\begin{array}{ll}
C_n(t)=C_n^{k} \;\;\mbox{ and }\;\; T_n(t)=T_n^{k}\;\;  \mbox{ if }\;\;  a^{k}_{n}\le t <a_n^{k}+T_n^{k}.
\end{array}
\end{equation}
\end{definition}

\subsection{The Dynamic Timing Model} \label{sec:dynamictimingmodel}
In this section, we derive a mathematical model that describes the scheduled behaviors of a set of acyclic tasks within $[t_a, t_b]$ under any scheduling algorithm. We rely on the following assumption:
\begin{assumption} \label{assumption:perturbationknowledge}
At the starting  time $t_a$  we assume that  the values of $\{C_n(t)\}_{n=1}^{N}$ and $\{T_n(t)\}_{n=1}^{N}$ for $ t \in [t_a, t_b]$ are predictable.
\end{assumption}
Several key concepts will be defined including the state variables, the fixed priority window,  and the dynamic timing model.

\subsubsection{State Variables} \label{section:states}
The state variables are usually used to to derive differential or difference equations that describe dynamic systems behaviors \cite{Brogan}.   To describe the dynamic behaviors of scheduled tasks, we define two state variables and one auxiliary variable as follows.
\begin{definition} \label{definition:Q}
The {\em dynamic deadline} $Q(t)$  is defined as a vector $Q(t)=[q_1(t),$ $\dots, q_N(t)]$. Each  $q_n(t)$, for $n=1,2,...,N$, is the length of the time interval starting at the time instant $t$ and ending at the absolute deadline for the effective instance of $\tau_n$.
\end{definition}
In other words, suppose $\tau_n^k$ is an effective task instance, then $q_n(t)=a_n^k+T_n^k - t$ .

\begin{definition} \label{definition:S}
The {\em spare} $S(t)$ is  defined as a vector $S(t)=[s_1(t), ..., s_N(t)]$, where $s_n(t)$, for $n=1,2,...,N$, denotes the amount of  CPU time that is available to compute the effective instance of $\tau_n$ from its time of arrival  to time instant $t$.
\end{definition}

\begin{definition} \label{definition:R}
 The {\em residue} $R(t)$ is an auxiliary variable that is defined as a vector $R(t)=[r_1(t), ..., r_N(t)]$, where $r_n(t)$, for $n=1,2,...,N$, denotes the remaining computing time required {\em after} time $t$  to finish computing the effective instance of $\tau_n$.
\end{definition}

We use the following example to further explain the meaning of $Q$, $R$ and $S$. For  ease of demonstration, we consider three periodic tasks.
\begin{example} \label{example}
Consider tasks $\{\tau_1,\tau_2,\tau_3\}$ with $[C_1(t), C_2(t), C_3(t)]=[0.5, 1, 2]$ and $[T_1(t), T_2(t), T_3(t)]=[3, 4, 6]$ for $t \in [0, +\infty)$. The three periodic tasks are scheduled under a fixed priority preemptive scheduling algorithm such that the priority of $\tau_1$ is higher than $\tau_2$, and the priority of $\tau_2$ is higher than $\tau_3$.
\end{example}

%\begin{figure}[tp]
%\centering
%\includegraphics[width=4in]{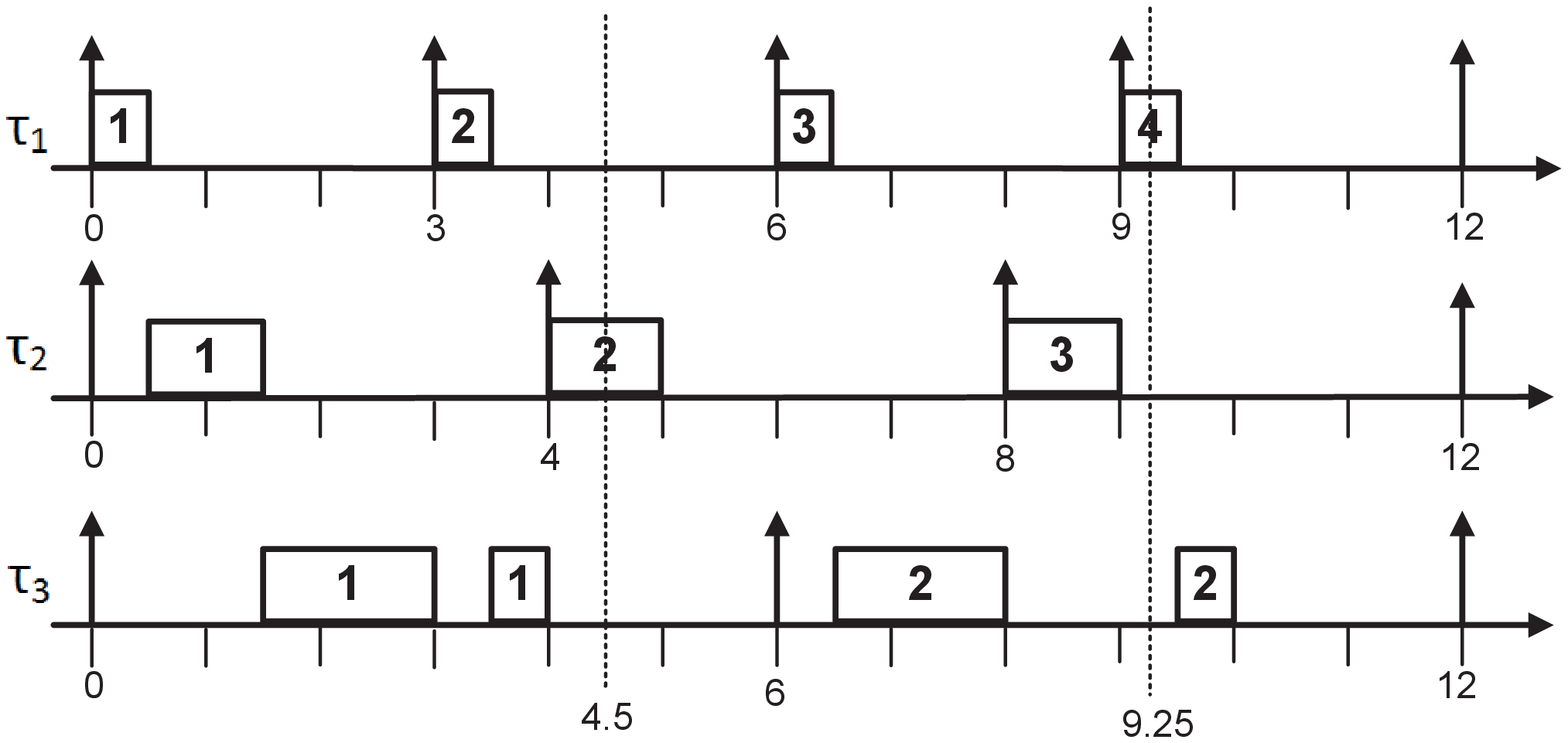}
%\label{fig:statevariables}
%\caption{Scheduled Behavior}
%\end{figure}
%
%\begin{figure}[tp]
%\centering
%\includegraphics[width=3.9in]{fixedpriority.ps}
%\label{fig:fixedprioritywindow}
%\caption{Fixed Priority Window}
%\end{figure}

\begin{figure}[tp]
\centering
\subfigure[Scheduled Behavior]{
\includegraphics[width=4in]{statevariables.eps}
\label{fig:statevariables}
}\\
\quad \subfigure[Fixed Priority Window]{
\includegraphics[width=3.9in]{}
\label{fig:fixedprioritywindow}
}
\caption{Three acyclic tasks scheduled on one processor}
\end{figure}

Figure \ref{fig:statevariables} demonstrates the computation of $\{\tau_1, \tau_2, \tau_3\}$ on one processor. We use the same plotting conventions as in Figure \ref{fig:acyclictask}, where the upper arrows indicate the times of arrival of the task instances.
It can be observed that the computation of lower priority tasks are interrupted by the computation of higher priority tasks. When $t=4.5$, $\tau_1^2$, $\tau_2^2$ and $\tau_3^1$ are the effective instances of the three tasks with time of arrival  $3$, $4$ and $0$ respectively.

We can observe that at  $t=4.5$,  $\tau_1^2$, $\tau_2^2$ and $\tau_3^1$ will expire at $6$, $8$ and $6$ respectively.
Thus, according to Definition \ref{definition:Q}, the relative deadlines are
\begin{equation}
[q_1(4.5), q_2(4.5), q_3(4.5)]=[6-4.5, 8-4.5, 6-4.5]=[1.5, 3.5, 1.5].
\end{equation}
After $t=4.5$ only $\tau_2^2$ has not finished computing. Therefore,  the remaining computing times after $t=4.5$ are $0$, $0.5$ and $0$. By Definition \ref{definition:R}, we have
\begin{equation}
 [r_1(4.5), r_2(4.5), r_3(4.5)]=[0, 0.5, 0].
\end{equation}
For $\tau_1^2$ with time of arrival at $3$, since no higher priority task is computed within $[3, 4.5]$, all the CPU time within $[3, 4.5]$ is available for $\tau_1^2$. For $\tau_2^2$ with time of arrival at $4$, since no higher priority task is computed within $[4, 4.5]$, all the CPU time within $[4, 4.5]$ is available for $\tau_2^2$. For $\tau_3^1$ with time of arrival $0$, since the CPU time within $[0, 1.5]$, $[3, 3.5]$ and $[4, 4.5]$ is allocated to the higher priority tasks, only the CPU time within $[1.5, 3]$ and $[3.5, 4]$ is available for $\tau_3^1$. Thus, according to Definition \ref{definition:S}, we have that
\begin{equation}
[s_1(4.5), s_2(4.5), s_3(4.5)]=[1.5, 0.5, 2].
\end{equation}
Similarly, at $t=9.25$, we can find
\begin{equation}
Q(9.25)=[2.75, 2.75, 2.75], \;\; R(9.25)=[0.25, 0, 0.5], \;\; S(9.25)=[0.25, 1, 1.5].
\end{equation}

It is worth mentioning that $s_n(t)$ is the amount of CPU time available to compute the effective instance of task $\tau_n$, but not necessarily the amount of CPU time actually taken by that instance. If  $s_n(t)\leq C_n(t)$, then the amount of CPU time
spent to compute the effective instance of task $\tau_n$ will be $s_n(t)$, which makes $r_n(t)=C_n(t)-s_n(t)$. On the other hand, if $s_n(t)>C_n(t)$, then the amount of CPU time spent to compute the effective instance of $\tau_n$ will only be $C_n(t)$, and the extra CPU time will be given to tasks with lower priority than $\tau_n$. In this case $r_n(t)$ will be zero since  no more computing time is needed. Therefore,
\begin{equation}
\label{equation:rsrelation}
r_n(t)={\rm max}\{0,C_n(t)-s_n(t)\}.
\end{equation}
This equation shows  that $R(t)$ solely depends on $S(t)$,  and explains why $R(t)$ is not a state variable. However,  $R(t)$ is more convenient to use for developing the dynamic timing model and the scheduled behavior in Section \ref{section:fixedprioritywindowbehavior} and Section \ref{section:scheduledbehavior}.

\subsubsection{Scheduling Algorithms}
We will now rigorously define a scheduling algorithm, which will be used by our mathematical models for the scheduled tasks later.
Let $\S=\{1,2,...,N\}$ be the set of  indices of tasks and let the function ${\rm Card}(\cdot)$ measure the number of elements in a set. Let $hp(n,t)$ denote the set of tasks with priorities higher than $\tau_n$ at time $t$. One way to formally define a scheduling algorithm is as follows.
\begin{definition}
A {\em scheduling algorithm} is a set-valued map between $\S \times R^+$ and the collection of all subsets of  $\S$. It is parametrized as $hp(n,t)$ where $n\in \S$ and $t \in \R^+$ so that $hp(n, t)\subset hp(m,t)$ if ${\rm Card}(hp(n,t))< {\rm Card}(hp(m,t))$.
\end{definition}

For example, assume all tasks are periodic and the RMS algorithm \cite{liulayland73} is used to assign fixed  priorities. Suppose that tasks are labeled according to the length of their periods i.e. tasks with longer periods have larger indices. Then  we  have:
\begin{equation}
\label{equation:fixedpriority}
hp(n,t)=\{1,2,...,n-1\}.
\end{equation}

Consider another example where a dynamic priority scheduling algorithm such as the EDF algorithm is used. Then, the values of $hp(n,t)$ depend on $Q(t)$. At any time $t$, the EDF  assigns higher priorities to the tasks whose effective instances have closer absolute deadlines. According to the definition of $Q(t)$, tasks whose effective instances having closer absolute deadlines also have smaller dynamic deadlines. Thus, for the EDF, the tasks with smaller values of $q_n(t)$ are assigned higher priorities. When two tasks have the same dynamic deadlines, we assume that a higher priority is assigned to the task with a smaller index. Hence, the set $hp(n,t)$ can be expressed as
\begin{equation}
hp(n,t)=\{i \vert  \mbox{either}\; q_i(t)< q_n(t), \; {\rm or} \; q_i(t)=q_n(t) \mbox{ and } i<n\}.
\end{equation}

\subsubsection{Fixed priority window}

Let us consider the time interval $[t_a, t_b]$ where the schedulibility of the tasks is concerned.  We further divide $[t_a, t_b]$ into
consecutive sub-intervals $[t_f(w), t_f(w+1))$, where $t_f(1)=t_a$ and $w=1,2,\dotsb$.
We require each sub-interval to be a fixed priority window as defined below:
\begin{definition} \label{definition:fixedprioritywindow}
A time interval $[t_f(w), t_f(w+1))$ is a {\em fixed priority window} if no instance of any task arrives within $(t_f(w), t_f(w+1))$.
\end{definition}
In other words, task instance can only arrive at either $t_f(w)$ or $t_f(w+1)$ but not in between.

To better understand this definition, we consider Figure 2(b) as an example: $[0,3)$ is a fixed priority window because no new instance of any task arrives within $(0,3)$; and $[0, 4)$ is not a fixed priority window because the task instance $\tau_1^2$ arrives at time $3 \in (0,4)$.

 The advantage of dividing $[t_a, t_b]$ into consecutive fixed priority windows  is that real-time tasks within each fixed priority window $[t_f(w), t_f(w+1))$ are relatively easier to be modeled.  These models can then be concatenated to derive more complex models for the scheduled behaviors on $[t_a, t_b]$.

Next, we study how to divide $[t_a, t_b]$ into consecutive fixed priority windows. We denote the  length of each window by $L_f(w)$, i.e
\begin{equation}
L_f(w)=t_f(w+1)-t_f(w),
\end{equation}
then each window $[t_f(w), t_f(w+1))$ can be rewritten as $[t_f(w), t_f(w)+L_f(w))$.
Hence, the partition of $[t_a, t_b]$ into fixed priority windows is determined by the window length $L_f(w)$ for $w=1,2,\dotsb$. To determine the value of each $L_f(w)$, we have the following claim
\begin{claim}
\label{claim:fixedprioritywindow}
For a set of acyclic tasks, at the beginning of any sub-interval, i.e. $t_f(w)$, if we choose $L_f(w)\leq \min\{q_1(t_f(w))$, $...,q_N(t_f(w))\}$, then $[t_f(w),t_f(w)+L_f(w))$ is a fixed priority window; otherwise, $[t_f(w),t_f(w)+L_f(w))$ is not a fixed priority window.
\end{claim}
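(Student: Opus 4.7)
The plan is to prove both directions of the equivalence by pinning down, for each task $\tau_n$, exactly when its next instance arrives after $t_f(w)$, and then taking the minimum over $n$. The crucial structural fact to exploit is the acyclic property from Definition~2.3(2), which says $a_n^{k+1}=a_n^k+T_n^k$: the next instance of $\tau_n$ arrives precisely at the absolute deadline of the current effective instance. Combined with the definition of $q_n(t)$ as the time from $t$ to that absolute deadline, this gives the clean identity that the arrival time of the next instance of $\tau_n$ after $t_f(w)$ equals $t_f(w)+q_n(t_f(w))$.

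First I would fix $w$, let $t=t_f(w)$, and, for each $n$, let $\tau_n^{k_n}$ denote the instance of $\tau_n$ that is effective at $t$ (existence is guaranteed by the acyclic structure as long as scheduling has commenced; for the initial window one may assume the first instances have arrived at $t_a$ or earlier). By Definition~2.5, $a_n^{k_n}\le t<a_n^{k_n}+T_n^{k_n}$, so $q_n(t)=a_n^{k_n}+T_n^{k_n}-t>0$. By the acyclic property, the very next arrival of $\tau_n$ strictly after $t$ occurs at $a_n^{k_n+1}=a_n^{k_n}+T_n^{k_n}=t+q_n(t)$, and no arrival of $\tau_n$ lies in the open interval $\bigl(t,\,t+q_n(t)\bigr)$.

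For the ``if'' direction, assume $L_f(w)\le \min_n q_n(t)$. Then for every $n$, the next arrival time $t+q_n(t)\ge t+L_f(w)$, so no instance of any task arrives in $\bigl(t,\,t+L_f(w)\bigr)$, verifying Definition~2.9. For the ``only if'' (contrapositive) direction, suppose $L_f(w)>\min_n q_n(t)$ and pick $n^\star$ achieving the minimum. Then $0<q_{n^\star}(t)<L_f(w)$, so the next instance of $\tau_{n^\star}$ arrives at $t+q_{n^\star}(t)\in \bigl(t,\,t+L_f(w)\bigr)$, violating the fixed-priority-window condition.

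I do not expect a real obstacle; the argument is essentially a careful unpacking of definitions once the acyclic identity $a_n^{k+1}=a_n^k+T_n^k$ is brought in. The only subtlety worth guarding against is the half-open/open convention in Definition~2.9: an arrival exactly at the right endpoint $t_f(w)+L_f(w)$ is allowed, which is why the inequality in the claim is non-strict ($\le$) on the ``if'' side and strict ($>$) on the ``only if'' side. I would state this convention explicitly at the start of the proof to avoid an off-by-an-endpoint error.
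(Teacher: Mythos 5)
Your proof is correct and follows essentially the same route as the paper's: both identify the next arrival after $t_f(w)$ as occurring at $t_f(w)+\min_n q_n(t_f(w))$ and then read off both directions of the equivalence. Your version is in fact slightly more careful, since you explicitly derive this arrival time from the acyclic identity $a_n^{k+1}=a_n^k+T_n^k$ and handle the case $L_f(w)<\min_n q_n(t_f(w))$ and the endpoint convention, which the paper's proof glosses over.
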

\begin{proof}
At the beginning of any sub-interval, i.e. $t_f(w)$, consider the dynamic deadlines $Q(t_f(w))=[q_1(t_f(w)), ...$ $, q_N(t_f(w))]$, as defined in Definition \ref{definition:Q}. According to the definition of $Q(t_f(w))$, we know that the next task instance after $t_f(w)$ arrives at $t_f(w)+\min\{q_1(t_f(w)),...,q_N(t_f(w))\}$.

If we choose $L_{f}(w)=\min\{q_1(t_f(w)),...,q_N(t_f(w))\}$, then no new instance of any task arrives in between $(t_f(w), t_f(w)+L_f(w))$. Therefore, $[t_f(w), t_f(w)+L_f(w))$ is a fixed priority window.

On the other hand, if we choose $L_{f}(w)>\min\{q_1(t_f(w)),...,q_N(t_f(w))\}$, the next  task instance after $t_f(w)$ will arrive in between $(t_f(w), t_f(w)+L_f(w))$. Therefore, $[t_f(w), t_f(w)+L_f(w))$ is not a fixed priority window.
\end{proof}

The division of $[t_a, t_b]$ into consecutive fixed priority windows is carried out using the following procedure.  At the beginning of the first sub-interval, let $t_f(1)=t_a$, we choose the first window length $L(1)$ to make the sub-interval $[t_f(1), t_f(1)+L_f(1))$ a fixed priority window. Then by letting $t_f(2)=t_f(1)+L_f(1)$ and choosing a window length $L_f(2)$, the second sub-interval $[t_f(2), t_f(2)+L_f(2))$ can be made a fixed priority window. The  process is repeated untill one sub-interval reaches the ending time $t_b$.
According to Claim {\ref{claim:fixedprioritywindow}}, we know that the largest possible window length $L_f(w)$ can be expressed as
\begin{equation} \label{equation:windowlength}
 L_f(w)=\min\{q_1(t_f(w)),..., q_N(t_f(w)), t_b-t_f(w)\}
\end{equation}
where the extra term $t_b-t_f(w)$ guarantees that the division procedure stops at time $t_b$.  A larger window length is preferred since it reduces the complexity in modeling the behaviors of tasks. Figure \ref{fig:fixedprioritywindow} shows an example of dividing the time interval $[0, 12]$ into a series of consecutive fixed priority windows for Example \ref{example} discussed previously.

\subsubsection{Evolution of the state variables} \label{section:fixedprioritywindowbehavior}
With the state variables well defined in Section \ref{section:states}, we are now ready to define the dynamic timing model as follows:
\begin{definition}
The {\em dynamic timing model} is a set of equations that describes the evolution of the state variables over time $t$.
\end{definition}
For simplicity, we focus here on the evolution of the state variables within one fixed priority window $[t_f(w), t_f(w)+L_f(w))$. Later, the evolution of the state variables within any time interval $[t_a, t_b]$ can be obtained by concatenating the models within each fixed priority window that belongs to $[t_a, t_b]$. For notational simplicity, we will drop the index $w$. Moreover, we will use $t^{-}$ to denote the time point that is less than $t$ but is arbitrarily close to $t$. Thus, the fixed priority window $[t_f(w), t_f(w)+L_f(w))$ can now be equivalently written as $[t_f, \{t_f+L_f\}^{-}]$.

In the dynamic timing model, the evolution of the state variables $Q(t)$ and $S(t)$, from the end of the last fixed priority window $t_f^{-}$ to any time within the current fixed priority window $t\in[t_f, \{t_f+L_f\}^{-}]$, can be derived in two steps: from $t_f^{-}$ to $t_f$, and from $t_f$ to $t$.

\noindent {\bf From $t_f^{-}$ to $t_f$:}
First, we discuss the evolution for the state variables from $t_f^{-}$ to $t_f$. For task $\tau_n$, the values of the state variables at time $t_f$, denoted by $q_n(t_f)$ and $s_n(t_f)$, depend on whether an instance of $\tau_n$ arrives at $t_f$.

\noindent (1) if no instance of $\tau_n$ arrives at $t_f$ then the dynamic deadline for $\tau_n$ is unchanged and must be positive i.e. $q_n(t^-_f)> 0$, and all state variables hold their values from $t_f^-$ to $t_f$, i.e.,
\begin{equation} \label{equation: q1}
{\rm when}\;\; q_n(t^-_f)> 0: \quad q_n(t_f)=q_n(t_f^{-}) \;\;{\rm and}\;\; s_n(t_f)=s_n(t_f^{-}) \;\;\; .
\end{equation}
(2) if an instance of $\tau_n$ arrives at $t_f$ then the dynamic deadline for $\tau_n$ will be reset to $0$ at $t^-_f$ i.e. $q_n(t^-_f)=0$.  The dynamic deadline at $t_f$ will be the relative deadline for the new task instance i.e. $q_n(t_f)=T_n(t_f)$. The state spare $s_n(t_f)$ is reset to zero since no time is available between $t^-_f$ and $t_f$. Therefore, we have
\begin{equation} \label{equation: q2}
{\rm when}\;\; q_n(t^-_f)= 0: \quad q_n(t_f)=T_n(t_f) \;\;{\rm and}\;\;  s_n(t_f)=0.
\end{equation}
In summary, according to (\ref{equation: q1}) and (\ref{equation: q2}), the evolution for the state variables from $t_f^{-}$ to $t_f$ can be written in a compact form as follows
\begin{align}
\label{equation:evolution1}
q_n(t_f)&=q_n(t_f^{-})+T_n(t_f)(1-{\mathsf{sgn}}(q_n(t_f^{-}))) \cr
s_n(t_f)&=s_n(t_f^{-}){\mathsf{sgn}}(q_n(t_f^{-}))
\end{align}
where ${\mathsf{sgn}}$ denotes the signum function, i.e. ${\mathsf{sgn}}(x)=1$ when $x>0$, ${\mathsf{sgn}}(x)=0$ when $x=0$, and ${\mathsf{sgn}}(x)=-1$ when $x<0$.

\noindent{\bf From $t_f$ to $t$:}
Next, we discuss the evolution for the state variables from $t_f$ to $t \in [t_f, \{t_f+L_f\}^{-}]$.

\noindent (1) For the dynamic deadline $q_n(t)$,   we know that  the absolute deadline for the effective instance of $\tau_n$  is at $t+q_n(t)$. Since this absolute deadline is also at $t_f+q(t_f)$, we must have $q_n(t)+t=q_n(t_f)+t_f$. Therefore, the equation for $q_n(t)$ can be written as
\begin{equation}
\label{equation:dt}
q_n(t)=t_f+q_n(t_f)-t.
\end{equation}
(2) For the spare $s_n(t)$, we know that the computation of $\tau_n$ is preempted until the computation of all higher priority tasks are completed.  Then, the amount of time within $[t_f, t]$ that is available to compute $\tau_n$ is
\begin{equation}
\label{eq:timeavailable}
{\rm max}\{0, t-t_f-{\displaystyle \sum_{\substack{i\in hp(n,t_f)}}r_i(t_f)}\}.
\end{equation}
where $\sum_{\substack{i\in hp(n,t_f)}}r_i(t_f)$ denotes the time allocated to compute tasks with higher priorities than $\tau_n$. The function max guarantees that it will not give a negative result. Therefore, the amount of time that is available to compute the effective instance of  $\tau_n$ from its time of arrival to $t$ is
\begin{equation} \label{eq:dt2}
s_n(t)=s_n(t_f)+{\rm max}\{0, t-t_f-{\displaystyle \sum_{\substack{i\in hp(n,t_f)}}r_i(t_f)}\}.
\end{equation}
In summary, according to (\ref{equation:dt}) and (\ref{eq:dt2}), the evolution for the state variables from $t_f$ to $t \in [t_f, \{t_f+L_f\}^{-}]$ can be expressed as
\begin{equation} \label{equation:evolution2}
\begin{split}
q_n(t)&=t_f+q_n(t_f)-t \\
s_n(t)&=s_n(t_f)+{\rm max}\{0, t-t_f-{\displaystyle \sum_{\substack{i\in hp(n,t_f)}}r_i(t_f)}\}.
\end{split}
\end{equation}
where $r_i(t_f)={\rm max}\{0,C_i(t_f)-s_i(t_f)\}$ according to equation (\ref{equation:rsrelation}).

\begin{algorithm}[tp]
\caption{Model} \label{algorithm:Model}
/* when $t \in [t_f, \{t_f+L_f\}^{-}]$*/ \\
\KwData{$t_f$, $t$, $Q(t_f^{-})$, $S(t_f^{-})$, $\{C_n(t)\}_{n=1}^{N}$, $\{T_n(t)\}_{n=1}^{N}$}
\KwResult{$Q(t)$, $S(t)$}
\BlankLine
\lnl{}\For{each task $\tau_n \in \Gamma$}{
/*the value of $Q, S$ at $t_f$*/ \\
\lnl{} $q_n(t_f)=q_n(t_f^{-})+T_n(t_f)(1-{\mathsf{sgn}}(q_n(t_f^{-}))) $\;
\lnl{} $s_n(t_f)=s_n(t_f^{-}){\mathsf{sgn}}(q_n(t_f^{-}))$ \;
\lnl{} $r_n(t_f)=\max \{0, C_n(t_f)-s_n(t_f)\}$\;
/*the value of $Q, S$ at $t \in [t_f, \{t_f+L_f\}^{-}]$*/ \\
\lnl{}$q_n(t)=t_f+q_n(t_f)-t$ \;
\lnl{}$s_n(t)=s_n(t_f)+{\rm max}\{0,t-t_f-\sum_{\substack{i\in hp(n,t_f)}}r_i(t_f)\}$\;
}
\lnl{}\Return $Q(t),S(t)$\;
\end{algorithm}

The mathematical equations discussed in (\ref{equation:evolution1}) and (\ref{equation:evolution2}) constitute the dynamic timing model within one fixed priority window $[t_f, \{t_f+L_f\}^{-}]$, which can be implemented using Algorithm \ref{algorithm:Model}.  Given the initial values of the state variables at $t_{f}^-$,  i.e. $Q(t_f^{-})$ and $S(t_f^{-})$, and the task characteristics within the fixed priority window, i.e., $\{C_n(t)\}_{n=1}^{N}$ and $\{T_n(t)\}_{n=1}^{N}$ for $t \in [t_f,\{t_f+L_f\}^-]$, we can use Algorithm \ref{algorithm:Model} to obtain the evolution of the state variables from $t_f^{-}$ to any time $t \in [t_f, \{t_f+L_f\}^{-}]$. The dynamic timing model within any time interval $[t_a, t_b]$ can be achieved by iteratively applying Algorithm \ref{algorithm:Model} to all the fixed priority windows.

\subsubsection{ Scheduled Behaviors of Tasks} \label{section:scheduledbehavior}
We demonstrate how to use the dynamic timing model to describe the scheduled behaviors of the real-time tasks. Consider $\Gamma=\{\tau_1, \tau_2, \dotsb, \tau_N\}$,  we first describe scheduled behavior of  task $\tau_n$ from $\Gamma$.  Within each fixed priority window $[t_f,\{t_f+L_f\}^{-}]$, the  scheduled behavior of task $\tau_n$ may go through three modes that will be indicated by a function  $\Phi_n(t)$: \\

{\bf  The preempted mode:} the computation of the effective instance of $\tau_n$ is blocked by tasks with higher priorities. This behavior is indicated  by letting $\Phi_n(t)=0.5$. It starts from the beginning of the fixed priority window $t_f$ and lasts for the amount of time ${\rm min}\{\sum_{\substack{i\in hp(n,t_f)}}r_i(t_f), L_f\}$, which is the sum of the remaining computing time of all higher priority tasks;   \\

{\bf The execution mode:}  the effective instance of $\tau_n$ is being computed by the CPU. The scheduled behavior  is indicated by letting $\Phi_n(t)=1$. It starts right after the preempted mode and lasts until the computation of the effect instance of $\tau_n$ completes, which equals $t_f+{\rm min}\{\sum_{\substack{i\in hp(n,t_f)+\{n\}}}r_i(t_f), L_f\}$;\\

{\bf The free mode:} the computation of the effective instance of $\tau_n$ has completed and new instance has not arrived. The scheduled behavior is indicated by letting $\Phi_n(t)=0$. It starts right after the execution mode and lasts till the end of the fixed priority window.

In summary, the scheduled behavior of $\tau_n$ within one fixed priority window $[t_f,\{t_f+L_f\}^{-}]$ can be expressed as
\begin{equation}
\label{equation:Phi}
\begin{array}{c}
\Phi_n(t)=\\
\\
\left\{
\begin{array}{ll}
0.5, & t \in [\;\;t_f\;\;, \;\; t_f+{\rm min}\{{\displaystyle \sum_{\substack{i\in \;hp(n,t_f)}}r_i(t_f)}, L_f\} \;\;] \\
1,   & t \in (\;\;t_f+{\rm min}\{{\displaystyle \sum_{\substack{i\in\; hp(n,t_f)}}r_i(t_f)}, L_f\}\;\;,\;\; t_f+{\rm min}\{{\displaystyle \sum_{\substack{i\in \; hp(n,t_f)+\{n\}}}r_i(t_f)}, L_f\}\;\; ] \\
0, &  t \in (\;\;t_f+{\rm min}\{\displaystyle \sum_{\substack{i\in \;hp(n,t_f)+\{n\}}}r_i(t_f), L_f\}\;\;,\;\; \{t_f+L_f\}^{-}\;\;]
\end{array}
\right.
\end{array}
\end{equation}
where $r_i(t_f)={\rm max}\{0,C_i(t_f)-s_i(t_f)\}$.

As it shows, the scheduled behavior of $\tau_n$ within one fixed priority window $[t_f,\{t_f+L_f\}^{-}]$ can be described by the state variables within $[t_f,\{t_f+L_f\}^{-}]$. Applying the same methodology for all tasks in $\Gamma$, we can derive the scheduled behavior of the real-time system within $[t_f,\{t_f+L_f\}^{-}]$. As the fixed priority window propagates forward,  the state variables will evolve according to the dynamic timing model in Algorithm \ref{algorithm:Model}. With the state variables evolving from $t_a$ to $t_b$, we obtain the scheduled behavior of the real-time system over the time interval $[t_a, t_b]$.

\begin{figure}[tp]
\centering
\includegraphics[width=0.70 \textwidth]{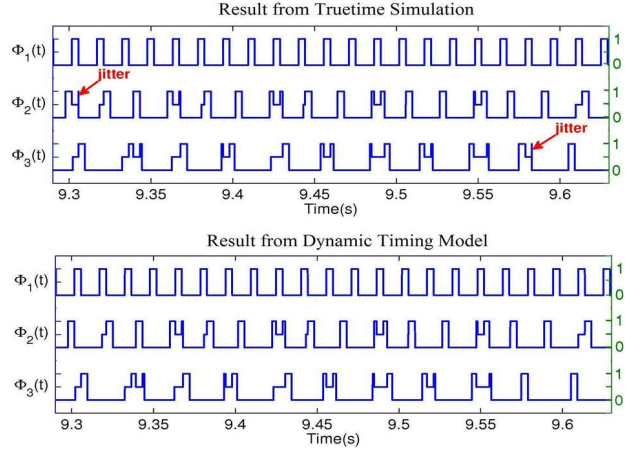}
\caption{The scheduled behaviors of $\Gamma$ within $[9.29, 9.63]$ seconds. The upper figure is produced by TrueTime, the lower figure is produced by the dynamic timing model. Jitters are marked by arrows.}  \label{fig:task execution}
\end{figure}

\subsubsection{Verification of the Dynamic Timing Model}
To verify the dynamic timing model, we compare the scheduled behavior of the real-time system derived from the dynamic timing model with the scheduled behavior of the same real-time system simulated using TrueTime \cite{truetime}. TrueTime is one of the most commonly used software tools that facilitates research on real-time systems. TrueTime and the dynamic timing model work in different ways. TrueTime simulates a computer with a real-time kernel and maintains data structures that are commonly found in the real-time kernel, such as ready queues, time queues, records for tasks, interrupt handlers, monitors, timers and so on \cite{truetime}. The dynamic timing model uses mathematical equations to analytically model the scheduling behavior, as shown in Algorithm \ref{algorithm:Model} and (\ref{equation:Phi}). For the same real-time system, ideally TrueTime and the dynamic timing model should provide the same result. However, we find incorrect jitters in the behavior generated by TrueTime 1.5 implemented in MATLAB. These jitters do not exist in the behavior generated by the dynamic timing model.

Suppose at time $0$, the state state variable  $Q(0^{-})=R(0^-)=0$. Consider a real-time system with three acyclic tasks running on it. The three acyclic tasks have the characteristics as $[C_1(t), C_2(t), C_3(t)]=[4, 4, 4]$ms and $[T_1(t), T_2(t), T_3(t)]=[15.4, 20.8, 30.3]$ms for $t \in [0, 10]$s. We are interested in the scheduled behavior of the real-time system within $[0, 10]$.  We run the simulation from $0$ to $10$s using TrueTime 1.5 implemented in MATLAB. Side by side, we evaluate the dynamic timing model and (\ref{equation:Phi}) using MATLAB from $0$ to $10$s. Figure \ref{fig:task execution} shows the comparative results of the scheduled behavior of the real-time tasks between the two different methods within $[9.29, 9.63]$.

By comparison, we see that the scheduled behaviors generated by TrueTime 1.5 and the dynamic timing model are identical  for most of the time. The identical part indicates that the dynamic timing model can be used to describe the scheduled behavior of the real-time system as precisely as TrueTime. However, the scheduled behaviors generated by TrueTime 1.5 and the dynamic timing model are not identical for $\Phi_2(t)$ when $t \in [9.3016, 9.3056]$s and for $\Phi_3(t)$ when $t \in [9.5788, 9.5828]$s. Further exploration shows that the differences are due to jitters caused by the numerical inaccuracy in TrueTime 1.5 implemented in MATLAB, as illustrated in the upper half of Fig.\ref{fig:task execution}. As a simulation tool, TrueTime 1.5 inevitably has truncation errors that accumulate with numerical integration. Since the dynamic timing model presented in this paper is based on mathematical equations, the system behavior at time $t$ can be determined by evaluating functions without using numerical integration.  Hence the chances for jitters are significantly reduced.  No jitters are observed from the lower half of Fig. \ref{fig:task execution}. This indicates that the dynamic timing model may be used side by side with TrueTime to resolve jitters.

\subsection{Dynamic Schedulability Test} \label{sec:schedulabilitytest}
In Section  \ref{sec:dynamictimingmodel},  we have established a dynamic timing model that can analytically describe the evolution of the state variables from $t_a$ to $t_b$. In this section, we study how to utilize the dynamic timing model to perform the dynamic schedulability test over $[t_a, t_b]$. The success of this test requires the knowledge of the task sets within $[t_a, t_b]$, as stated in Assumption \ref{assumption:perturbationknowledge}.

For the set of  real-time tasks $\Gamma=\{\tau_1, \tau_2, \dotsb, \tau_N\}$, the dynamic schedulability test over $[t_a, t_b]$ can be decomposed to check whether each task $\tau_n$ of $\Gamma$ is able to meet its deadlines within each fixed priority window that belongs to $[t_a, t_b]$. This is due to the following facts: (1) $\Gamma$ is schedulable within $[t_a, t_b]$ if and only if $\Gamma$ is schedulable within each fixed priority window $[t_f(w), \{t_f(w)+L_f(w)\}^{-}]$,  for $w=1,2,\dotsb$; (2) $\Gamma$ is schedulable within any fixed priority window $[t_f(w), \{t_f(w)+L_f(w)\}^{-}]$ if and only if each individual task $\tau_n \in \Gamma$ is schedulable within $[t_f(w), \{t_f(w)+L_f(w)\}^{-}]$. The following theorem states the necessary and sufficient conditions for the schedulability of $\tau_n$ within any fixed priority window $[t_f(w), \{t_f(w)+L_f(w)\}^{-}]$.
\begin{theorem}
\label{theorem:dynamicschedulability}
A task $\tau_n$ is schedulable within $[t_f(w), \{t_f(w)+L_f(w)\}^{-}]$ if and only if it satisfies ONE of the following two conditions:
\begin{enumerate}
\item $q_n(\{t_f(w)+L_f(w)\}^{-})=0$ and $C_n(\{t_f(w)+L_f(w)\}^{-})\le s_n(\{t_f(w)+L_f(w)\}^{-})$;
\item $q_n(\{t_f(w)+L_f(w)\}^{-})>0$.
\end{enumerate}
\end{theorem}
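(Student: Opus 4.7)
The plan is to argue by cases on whether the deadline of the effective instance of $\tau_n$ lies strictly inside the window or coincides with its right endpoint, using the fact that these are the only two possibilities consistent with how fixed priority windows are constructed.

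First I would invoke Claim \ref{claim:fixedprioritywindow} and equation (\ref{equation:windowlength}): the window length satisfies $L_f(w) \le q_n(t_f(w))$ for every $n$. Combined with the linear decay $q_n(t) = t_f(w) + q_n(t_f(w)) - t$ from (\ref{equation:dt}), this forces $q_n(\{t_f(w)+L_f(w)\}^{-}) \ge 0$. Hence exactly one of the two conditions in the statement is applicable at the end of the window, and the deadline of the effective instance of $\tau_n$ is at $t_f(w) + q_n(t_f(w)) = \{t_f(w)+L_f(w)\}^{-} + q_n(\{t_f(w)+L_f(w)\}^{-})$, which is either inside or at the right endpoint of the window.

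Next I handle the two cases. When $q_n(\{t_f(w)+L_f(w)\}^{-}) > 0$, the absolute deadline of the effective instance of $\tau_n$ lies strictly beyond the window, so by Definition \ref{definition:effective} no deadline of $\tau_n$ expires within $[t_f(w), \{t_f(w)+L_f(w)\}^{-}]$ and $\tau_n$ is vacuously schedulable within this window in both directions. When $q_n(\{t_f(w)+L_f(w)\}^{-}) = 0$, the deadline coincides with the end of the window, so schedulability of $\tau_n$ within this window is equivalent to the effective instance having completed its computation by $\{t_f(w)+L_f(w)\}^{-}$. By Definitions \ref{definition:S} and \ref{definition:R} together with (\ref{equation:rsrelation}), completion is equivalent to $r_n(\{t_f(w)+L_f(w)\}^{-}) = 0$, which in turn is equivalent to $C_n(\{t_f(w)+L_f(w)\}^{-}) \le s_n(\{t_f(w)+L_f(w)\}^{-})$. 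Combining the two directions in each case yields the necessity and sufficiency claimed.

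The only genuinely delicate point is the second case, and in particular the identification of ``schedulable within the window'' with $s_n \ge C_n$ at the window's end. The potential confusion is that $s_n(t)$ measures the CPU time made available to $\tau_n$ since its time of arrival, which may pre-date $t_f(w)$, while the task could in principle have completed much earlier than $\{t_f(w)+L_f(w)\}^{-}$. I would resolve this by observing from (\ref{eq:dt2}) that $s_n$ is monotone non-decreasing in $t$, so if $s_n(\{t_f(w)+L_f(w)\}^{-}) \ge C_n(\{t_f(w)+L_f(w)\}^{-})$ then the task has indeed already finished by the deadline, and conversely a missed deadline at $\{t_f(w)+L_f(w)\}^{-}$ forces $s_n < C_n$ at that time. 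Everything else is bookkeeping on the state variables from Section \ref{section:fixedprioritywindowbehavior}.
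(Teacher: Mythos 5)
Your proposal is correct and follows essentially the same route as the paper's own proof: a case split on whether $q_n(\{t_f(w)+L_f(w)\}^{-})$ is zero or positive, with the zero case reduced via $r_n = \max\{0, C_n - s_n\}$ to the inequality $C_n \le s_n$, and the positive case handled as vacuous schedulability. Your additional observations (that $L_f(w) \le q_n(t_f(w))$ rules out a negative dynamic deadline at the window's end, and that monotonicity of $s_n$ justifies checking completion only at the endpoint) are sensible refinements the paper leaves implicit, but they do not change the argument.
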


\begin{proof}
If an instance of $\tau_n$ expires at $t_f(w)+L_f(w)$, i.e. $q_n(\{t_f(w)+L_f(w)\}^{-})=0$, then the schedulability of $\tau_n$ within $[t_f(w), \{t_f(w)+L_f(w)\}^{-}]$ is satisfied if and only if  the computation of this instance has completed, i.e.
\begin{equation*}
r_n(\{t_f(w)+L_f(w)\}^{-})=0.
\end{equation*}
According to (\ref{equation:rsrelation}), the above equation can be rewritten as
\begin{equation*}
{\rm max}\{0,C_n(\{t_f(w)+L_f(w)\}^{-})-s_n(\{t_f(w)+L_f(w)\}^{-})\}=0,
\end{equation*}
which implies that
\begin{equation}
C_n(\{t_f(w)+L_f(w)\}^{-})\le s_n(\{t_f(w)+L_f(w)\}^{-}).
\end{equation}

If no instance of $\tau_n$ expires at $t_f(w)+L_f(w)$, i.e. $q_n(\{t_f(w)+L_f(w)\}^{-})>0$, then the schedulability of $\tau_n$ within $[t_f(w), \{t_f(w)+L_f(w)\}^{-}]$ is automatically guaranteed.
\end{proof}

According to Assumption \ref{assumption:perturbationknowledge}, we can predict the actual task characteristics $\{C_n(t)\}_{n=1}^{N}$ and $\{T_n(t)\}_{n=1}^{N}$ within $ [t_a, t_b]$.
Given the actual task characteristics $\{C_n(t)\}_{n=1}^{N}$ and $\{T_n(t)\}_{n=1}^{N}$ for $t \in [t_a, t_b]$, we can perform the dynamic schedulability test over the time interval $[t_a, t_b]$ using Algorithm \ref{algorithm:robustness}. Algorithm \ref{algorithm:robustness} iteratively checks the schedulability of $\Gamma$ within each fixed priority window in the following ways: (1) first, at the beginning of any sub-interval, it calculates the length of the current fixed priority window $L_f$ according to equations (\ref{equation:windowlength}), as shown in Lines $10$ of Algorithm \ref{algorithm:robustness}. (2) then, it utilizes the dynamic timing model in Algorithm \ref{algorithm:Model} to obtain the values of the state variables at the end of the current fixed priority window, as indicated by Line $11$; (3) finally, it evaluates the schedulability of $\tau_n$, where $n=1,\dotsb,N$, within $[t_f, \{t_f+L_f\}^-]$ according to Theorem \ref{theorem:dynamicschedulability}, as shown in  Lines $12-20$ of Algorithm \ref{algorithm:robustness}. To make the fixed priority window propagates seamlessly within $[t_a, t_b]$, it assigns the starting time of the next fixed priority window to be the ending time of the current fixed priority window, as indicated by Line $20$.

The variable ${\rm ds}_n(w)$ indicates the dynamic schedulability test result of $\tau_n$ within $[t_f(w), \{t_f(w)+L_f(w)\}^{-}]$: when $\tau_n$ is schedulable within $[t_f(w), \{t_f(w)+L_f(w)\}^{-}]$, ${\rm ds}_n(w)=1$; otherwise, ${\rm ds}_n(w)=0$. The set ${\rm DS}_n=[ds_n(1), ds_n(2), \dotsb]$ contains the dynamic schedulability test results of $\tau_n$ within all fixed priority windows that belong to $[t_a, t_b]$. The task $\tau_n$ is schedulable within $[t_a, t_b]$ if and only if  ${\rm min}\{\rm DS_n\}=1$.  The task set $\Gamma$ is schedulable within $[t_a,t_b]$ if and only if all individual tasks are dynamically schedulable within $[t_a, t_b]$, i.e. ${\rm min}_{1\le n \le N}\{{\rm min}\{\rm DS_n\}\}=1$.

\begin{algorithm}[tp]
\caption{Dynamic Schedulability Test}\label{algorithm:robustness}
/*Schedulability of $\Gamma$ within $[t_a,t_b]$ */ \\
\KwData{$t_a$, $t_b$, $Q(t_a^{-})$, $S(t_a^{-})$, $\{C_n(t)\}_{n=1}^{N}$, $\{T_n(t)\}_{n=1}^{N}$}
\KwResult{$\{\rm DS_n\}_{n=1}^{N}$}
\BlankLine
\lnl{}$t_f=t_a$\;
\lnl{}\For{each $\tau_n \in \Gamma$}{
\lnl{}${\rm DS_n}=[\;]$\;
}
/*check each fixed priority window*/ \\
\lnl{}\While{$t_f<t_b$}{
/* The length of the current fixed priority window $L_f$ */ \\
\lnl{}\For{each $\tau_n \in \Gamma$}{
\lnl{}\If{$q_n(t_f^{-})==0$}{
\lnl{}$q_n(t_f)=T_n(t_f)$\;}
\lnl{}\Else{
\lnl{}$q_n(t_f)=q_n(t_f^-)$\; }}
\lnl{}$L_f=\min\{q_1(t_f), ..., q_N(t_f), t_b-t_f\}$\;
/* State Variables at the end of the current fixed priority window */\\
\lnl{}$\;\;\;\;\;\;\;$ $[Q(\{t_f+L_f\}^-),S(\{t_f+L_f\}^-)]= $\\
Model($t_f, \{t_f+L_f\}^-, Q(t_f^-), S(t_f^-), \{C_n(t)\}_{n=1}^{N}, \{T_n(t)\}_{n=1}^{N}$)\;
/* Schedulability within the current fixed priority window */ \\
\lnl{}\For{each $\tau_n \in \Gamma$}{
\lnl{}\If{$q_n(\{t_f+L_f\}^{-})==0$}{
\lnl{}\If{$C_n(\{t_f+L_f\}^{-})<s_n(\{t_f+L_f\}^{-})$}{
\lnl{}${\rm ds_n}=1$;}
\lnl{}\Else{
\lnl{}${\rm ds_n}=0$\;}
}
\lnl{}\Else{
\lnl{}${\rm ds_n}=1$\;}
\lnl{} ${\rm DS_n}=[{\rm DS_n, ds_n}]$ \;}
\lnl{}$t_f=t_f+L_f$ \;
}
\lnl{}\Return $\{\rm DS_n\}_{n=1}^{N}$\;
\end{algorithm}

\subsection{A Measure of Robustness} \label{section:schedulingrobustness}
We let  $\{C_n^{\rm nom}(t)\}_{n=1}^{N}$ and $\{T_n^{\rm nom}(t)\}_{n=1}^{N}$ denote the nominal task characteristics known at the design phase, and let $\{C_n(t)\}_{n=1}^{N}$ and $\{T_n(t)\}_{n=1}^{N}$ denote the actual task characteristics under online perturbations.
We assume that there is no perturbation on the relative deadlines, i.e. $T_n(t)=T_n^{\rm nom}(t)$ for $n=1,2,..., N$. This assumption is reasonable in control and robotics applications,  where $T_n(t)$ represent sampling times that are often fixed.
At time $t$, we define the (instantaneous) perturbations on computing times as follows:
\begin{definition}
The {\em perturbations on computing times} are defined as a vector $\mathcal{E}(t)=[\epsilon_1(t),..., \epsilon_N(t)]$, where $\epsilon_n(t)=C_n(t)-C^{\rm nom}_n(t)$ for $n=1,2, ..., N$.
\end{definition}
 The value of $\epsilon_n(t)$ can be either positive or negative. If $C_n(t)>C^{\rm nom}_n(t)$, then $\epsilon_n(t)$ is positive.
Note that in future works, $T_n(t)$ may be viewed as a control variable that can be adjusted to tolerate the perturbations in similar ways as the general elastic scheduling algorithms \cite{Buttazzo02,HuLemmon06}.

Next, we consider the accumulated effect caused by  the perturbations $\mathcal{E}(t)$ over time. These effects will be captured by defining
perturbations on  the state variables.  We let
 $\{Q_n^{\rm nom}(t)\}_{n=1}^{N}$ and $\{S_n^{\rm nom}(t)\}_{n=1}^{N}$ denote the state variables in the nominal case, and let $\{Q_n(t)\}_{n=1}^{N}$ and $\{S_n(t)\}_{n=1}^{N}$ denote the state variables under accumulated perturbations.
 Since $T_n(t)=T_n^{\rm nom}(t)$ for $n=1,2,..., N$, we know that the absolute deadline and the time of arrival of each task instance in the nominal case is the same as these in the actual case. Thus, according to Definition \ref{definition:Q}, we know that the dynamic deadline of each task instance in the nominal case is the same as that in the actual case, i.e.
\begin{equation}\label{equation:perturedQ}
q_n(t)=q_n^{\rm nom}(t)
\end{equation}
which, together with (\ref{equation:windowlength}),  implies that
\begin{equation}\label{equation:perturedQextend}
t_f(w)=t_f^{\rm nom}(w) \quad L_f(w)=L_f^{\rm nom}(w).
\end{equation}
On the other hand, since $C_n(t)\neq C_n^{\rm nom}(t)$, we know that the spare of each task instance in the nominal case is different from that in the actual case, i.e.
\begin{equation}\label{equation:perturedS}
s_n(t) \neq s_n^{\rm nom}(t).
\end{equation}
Equations (\ref{equation:perturedQ}) and (\ref{equation:perturedS}) indicate that there are perturbations on the state variable $S$, but not on the state variable $Q$. We define the perturbations on the state variable $S$ as follows:
\begin{definition}
The {\em perturbations on the state variable spare} is defined as a vector $\mathcal{H}(t)=[\eta_1(t),..., \eta_N(t)]$, where $\eta_n(t)$ denotes the strength of the perturbation on $s_n(t)$, i.e.
\begin{equation}
\label{equation:eta}
\eta_n(t)=-(s_n(t)-s^{\rm nom}_n(t))
\end{equation}
where we use a negative sign because a positive perturbation imposed on the computing time of a task instance will reduce the value of the spare.
\end{definition}

According to the above analysis, we know that at any time $t$, the total perturbations imposed on the real-time tasks consist of two portions: $\mathcal{E}(t)$, the perturbations on the computing time, and $\mathcal{H}(t)$, the perturbations on the state variable spare, which reflects the accumulated effect of $\mathcal{E}(t)$ before time $t$.
The total perturbations imposed on the real-time system at time $t$ are the summation $\mathcal{E}(t)+\mathcal{H}(t)$.

 In particular, the total perturbations imposed on one task $\tau_n$ at time $t$ can be expressed as $\epsilon_n(t)+\eta_n(t)$.  We are interested in finding the maximum total perturbations $\epsilon_n(t)+\eta_n(t)$ that can be tolerated by a single task $\tau_n$ without sacrificing the schedulability of $\tau_n$. According to (\ref{equation:perturedQ}), (\ref{equation:perturedQextend}) and Theorem \ref{theorem:dynamicschedulability}, we can easily prove the following claims.
\begin{claim} \label{claim:perturbation}
$\tau_n$ is schedulable within $[t_f(w), \{t_f(w)+L_f(w)\}^{-}]$ under  perturbations $\epsilon_n(t)+\eta_n(t)$  if and only if  ONE of the following two conditions are satisfied:
\begin{enumerate}
\item $q_n(\{t_f(w)+L_f(w)\}^{-})=0$ and $\epsilon_{n}(\{t_f(w)+L_f(w)\}^{-})+\eta_{n}(\{t_f(w)+L_f(w)\}^{-})\le s^{\rm nom}_n(\{t_f^{\rm nom}(w)+L_f^{\rm nom}(w)\}^{-})-C^{\rm nom}_n(\{t_f^{\rm nom}(w)+L_f^{\rm nom}(w)\}^{-})$;
\item $q_n(\{t_f(w)+L_f(w)\}^{-})>0$.
\end{enumerate}
\end{claim}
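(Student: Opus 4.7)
The plan is to reduce the claim directly to Theorem \ref{theorem:dynamicschedulability} by substituting the perturbation definitions into the inequality appearing in that theorem. Since equations (\ref{equation:perturedQ}) and (\ref{equation:perturedQextend}) already tell us that the dynamic deadlines and hence the fixed priority window boundaries are invariant under the perturbations (the perturbations only affect $C_n$, not $T_n$), the whole geometry of the schedule is preserved; only the ``spare accounting'' gets shifted by $\eta_n(t)$.

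First I would invoke Theorem \ref{theorem:dynamicschedulability} applied to the \emph{actual} (perturbed) task characteristics $\{C_n(t), T_n(t)\}$. This gives that $\tau_n$ is schedulable within $[t_f(w),\{t_f(w)+L_f(w)\}^{-}]$ iff either (a) $q_n(\{t_f(w)+L_f(w)\}^{-})>0$, or (b) $q_n(\{t_f(w)+L_f(w)\}^{-})=0$ and $C_n(\{t_f(w)+L_f(w)\}^{-}) \le s_n(\{t_f(w)+L_f(w)\}^{-})$. By (\ref{equation:perturedQ}), the condition on $q_n$ is identical whether we use the nominal or the actual dynamic deadlines, and by (\ref{equation:perturedQextend}) the window endpoints satisfy $t_f(w)+L_f(w) = t_f^{\rm nom}(w)+L_f^{\rm nom}(w)$, so case (a) immediately yields case 2 of the claim.

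Second, for case (b) I would rewrite $C_n$ and $s_n$ in terms of their nominal counterparts. By the definition of $\mathcal{E}(t)$ we have $C_n(t) = C_n^{\rm nom}(t)+\epsilon_n(t)$, and by the definition of $\mathcal{H}(t)$ in (\ref{equation:eta}) we have $s_n(t) = s_n^{\rm nom}(t)-\eta_n(t)$. Substituting these into $C_n \le s_n$ evaluated at $\{t_f(w)+L_f(w)\}^{-}$, and again using (\ref{equation:perturedQextend}) to identify that point with $\{t_f^{\rm nom}(w)+L_f^{\rm nom}(w)\}^{-}$, a one-line rearrangement produces
\begin{equation*}
\epsilon_n(\{t_f(w)+L_f(w)\}^{-})+\eta_n(\{t_f(w)+L_f(w)\}^{-}) \le s_n^{\rm nom}(\{t_f^{\rm nom}(w)+L_f^{\rm nom}(w)\}^{-}) - C_n^{\rm nom}(\{t_f^{\rm nom}(w)+L_f^{\rm nom}(w)\}^{-}),
\end{equation*}
which is precisely case 1 of the claim. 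Since every step is an ``iff'', both directions follow simultaneously.

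There is really no hard step here; the content of the claim is just Theorem \ref{theorem:dynamicschedulability} re-expressed in the perturbation coordinates. The only place where I would want to be careful is verifying that $\mathcal{E}(t)$ and $\mathcal{H}(t)$ are being evaluated at the correct time instant and that the nominal/actual windows really do coincide — both of which are handled cleanly by (\ref{equation:perturedQ}) and (\ref{equation:perturedQextend}), together with the standing assumption $T_n(t)=T_n^{\rm nom}(t)$. Once those identifications are made, the algebraic rearrangement is routine.
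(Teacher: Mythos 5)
Your proposal is correct and follows exactly the route the paper intends: the paper gives no explicit proof of this claim, stating only that it follows from (\ref{equation:perturedQ}), (\ref{equation:perturedQextend}) and Theorem \ref{theorem:dynamicschedulability}, and your argument fills in precisely those steps --- applying the theorem to the perturbed characteristics, using the invariance of $q_n$ and the window boundaries, and substituting $C_n = C_n^{\rm nom}+\epsilon_n$ and $s_n = s_n^{\rm nom}-\eta_n$ into $C_n \le s_n$. The algebraic rearrangement to $\epsilon_n+\eta_n \le s_n^{\rm nom}-C_n^{\rm nom}$ is exactly right.
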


We introduce a measure of robustness $B_R$ that quantifies the tolerance of a real-time scheduling algorithm to uncertain perturbations to the computing times of tasks within $[t_a, t_b]$. A real-time scheduling algorithm with a larger value for $B_R$ is more robust than a real-time scheduling algorithm with smaller values for $B_R$.
\begin{definition} \label{df:BR}
We define a {\em measure of robustness} $B_R(w)$ over the fixed priority window $[t_f(w), \{t_f(w)+L_f(w)\}^{-}]$ where $w=1,2,...$ as the least upper bound on the tolerable perturbations for all task instances expiring at $t_f(w)+L_f(w)$, i.e.
\begin{equation} \label{equation:BRW}
\begin{array}{c}
B_R(w)={\rm min}_{n \in \{i\vert q_i(\{t_f(w)+L_f(w)\}^{-})=0\}}\\
(s^{\rm nom}_n(\{t_f^{\rm nom}(w)+L_f^{\rm nom}(w)\}^{-})-C^{\rm nom}_n(\{t_f^{\rm nom}(w)+L_f^{\rm nom}(w)\}^{-}))
\end{array}
\end{equation}
We define the measure of robustness $B_R$ over time interval $[t_a, t_b]$ as the minimum value of $B_R(w)$  i.e.
\begin{equation}
B_R= \min_w B_R(w).
\end{equation}
\end{definition}

\begin{claim}
Within $[t_a, t_b]$, the  nominal design of an acyclic task set under a real-time scheduling algorithm is schedulable under any perturbation of a strength less than $B_R$.
\end{claim}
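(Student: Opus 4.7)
The plan is to unpack the claim into a chain of equivalences already established in the section. Concretely, I would first reduce schedulability on $[t_a, t_b]$ to per-window, per-task schedulability using the two facts stated just before Theorem~\ref{theorem:dynamicschedulability}: the task set $\Gamma$ is schedulable on $[t_a, t_b]$ if and only if every $\tau_n$ is schedulable in every fixed priority window $[t_f(w), \{t_f(w)+L_f(w)\}^{-}]$ that partitions $[t_a,t_b]$. This reduction is what allows $B_R$, which is itself defined window by window, to control the global answer.

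Next, I would invoke Claim~\ref{claim:perturbation} window by window. For each $w$ and each $n$, one of two situations holds at $\{t_f(w)+L_f(w)\}^{-}$: either no instance of $\tau_n$ expires there, in which case schedulability of $\tau_n$ in that window is automatic and imposes no constraint on $\epsilon_n + \eta_n$; or an instance expires there, in which case the only requirement is
\begin{equation}
\epsilon_n(\{t_f(w)+L_f(w)\}^{-}) + \eta_n(\{t_f(w)+L_f(w)\}^{-}) \le s^{\rm nom}_n(\{t_f^{\rm nom}(w)+L_f^{\rm nom}(w)\}^{-}) - C^{\rm nom}_n(\{t_f^{\rm nom}(w)+L_f^{\rm nom}(w)\}^{-}).
\end{equation}
Taking the minimum of the right-hand side over the set of indices $\{i \mid q_i(\{t_f(w)+L_f(w)\}^{-})=0\}$ is precisely $B_R(w)$ by Definition~\ref{df:BR}, and the minimum over $w$ is $B_R$.

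The final step is the quantification argument. Interpreting the \emph{strength} of a perturbation as $\sup_{n,t} \bigl(\epsilon_n(t)+\eta_n(t)\bigr)$ on $[t_a,t_b]$ (the only direction relevant for breaking schedulability, since negative perturbations only free CPU time), any perturbation of strength strictly less than $B_R$ satisfies
\begin{equation}
\epsilon_n(\{t_f(w)+L_f(w)\}^{-}) + \eta_n(\{t_f(w)+L_f(w)\}^{-}) < B_R \le B_R(w)
\end{equation}
for every $w$ and every $n$ with $q_n(\{t_f(w)+L_f(w)\}^{-})=0$. Combined with the automatic case for non-expiring tasks, Claim~\ref{claim:perturbation} then yields schedulability of each $\tau_n$ inside each fixed priority window, hence of $\Gamma$ on $[t_a,t_b]$ by the initial reduction.

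The only non-mechanical point, and what I expect to be the subtle step, is pinning down what \emph{strength} formally means and confirming that the strict inequality in the claim is compatible with the non-strict inequality in Claim~\ref{claim:perturbation}; once the strength is taken as the supremum of the signed total perturbation, the strict gap $< B_R$ comfortably implies the required $\le$ at each window boundary, and no further estimate is needed. Everything else is bookkeeping over the finite collection of fixed priority windows inside $[t_a, t_b]$.
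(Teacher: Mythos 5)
Your proposal is correct and follows essentially the same route as the paper's proof: a case split at each window boundary on whether $q_n(\{t_f(w)+L_f(w)\}^{-})$ is zero, an appeal to Claim \ref{claim:perturbation}, and the chain of inequalities $\epsilon_n+\eta_n < B_R \le B_R(w) \le s^{\rm nom}_n - C^{\rm nom}_n$ for expiring instances. Your explicit attention to what ``strength'' means and to the strict-versus-non-strict inequality is a point the paper glosses over, but it does not change the argument.
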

\begin{proof}
Suppose an arbitrary task $\tau_n$ suffers the perturbation $\epsilon_{n}(\{t_f(w)+L_f(w)\}^{-})$ $+\eta_{n}(\{t_f(w)+L_f(w)\}^{-})$ at the end of a fixed priority window $[t_f(w), \{t_f(w)+L_f(w)\}^-]$. If $q_n(\{t_f(w)+L_f(w)\}^{-})>0$, the second condition in Claim \ref{claim:perturbation} is satisfied and $\tau_n$ is schedulable under the perturbation; if $q_n(\{t_f(w)+L_f(w)\}^{-})=0$, we have that $\epsilon_{n}(\{t_f(w)+L_f(w)\}^{-})+\eta_{n}(\{t_f(w)+L_f(w)\}^{-})\leq B_R \leq B_R(w) \leq s^{\rm nom}_n(\{t_f^{\rm nom}(w)+L_f^{\rm nom}(w)\}^{-})-C^{\rm nom}_n(\{t_f^{\rm nom}(w)+L_f^{\rm nom}(w)\}^{-})$. Thus, the first condition in Claim \ref{claim:perturbation} is satisfied and $\tau_n$ is schedulable to the perturbation. Since the above proof holds for any task within any fixed priority window that belongs to $[t_a, t_b]$, the nominal design  is schedulable under any perturbation of a strength less than $B_R$.
\end{proof}

At any time $t_a$, if we input the nominal task characteristics $\{T_n^{\rm nom}(t)\}_{n=1}^{N}$ and $\{C_n^{\rm nom}(t)\}_{n=1}^{N}$ to Algorithm \ref{algorithm:Model},  we can obtain the evolution of the nominal state variables $\{Q_n^{\rm nom}(t)\}_{n=1}^{N}$ and $\{S_n^{\rm nom}(t)\}_{n=1}^{N}$ from $t_a$ to $t_b$ by iteratively applying the dynamic timing model in Algorithm \ref{algorithm:Model}. Moreover, the right hand side of (\ref{equation:BRW}) is computed at $t_a$ by using the nominal state variables. Therefore, the measure of robustness of the real-time system $B_R$ can be predicted at $t_a$ without relying on Assumption \ref{assumption:perturbationknowledge}.

\section{Robustness in Battery Management}\label{batteryman}

Robustness of  a battery management algorithm can be measured by its tolerance to potentially harmful discharges and variations in battery parameters.  The tolerance decreases when the  SoC decreases as the battery is being drained.
Battery management algorithms can be developed to manage multiple batteries at the same time,  so that a battery near the point of depletion can be replaced by a freshly charged battery. We will show that  the
SoC of a battery can be estimated
at any point of time during system operation  using the combination of a dynamic battery model and the dynamic timing model developed in the previous section. We further present an algorithm to predict whether the battery is capable of maintaining a steady output voltage when it is supporting a time-varying load. The methodology used to detect impending battery failure can be used in any battery management system to increase robustness.

\subsection{Background}
\subsubsection{Dynamic Battery Model}
Battery modeling is a challenging task due to complex electro-chemical processes occurring within a battery \cite{rvw_model,rvw_vlsi}.  Battery models can be represented in various forms. Chen and Mora \cite{rincon} provide models that are verified by experimental data and are more suitable to be combined with our dynamic timing model.

\begin{figure}[tp]
      \centering
      \includegraphics[width=0.9\textwidth]{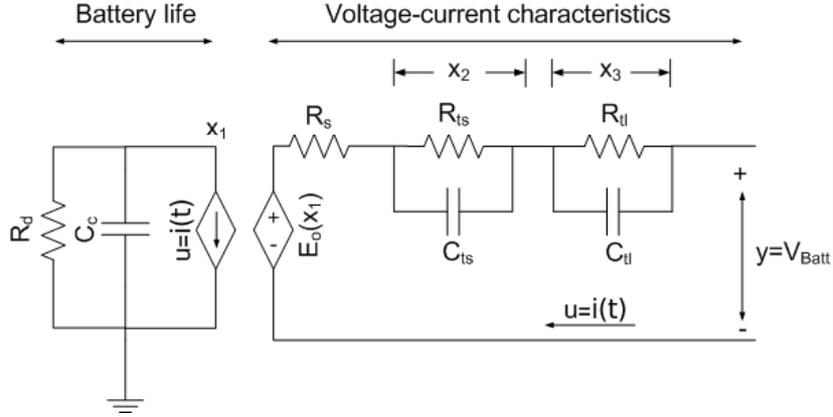}
            \caption{Chen and Mora's battery model}
      \label{battmdl}
\end{figure}

Chen and Mora's model as shown in Figure \ref{battmdl} is an equivalent circuit representation of a Lithium-ion (Li-ion) battery. The model has two coupled circuits. The circuit on the left models the
SoC $x_1$
and the  circuit on the right models the variation of the battery output voltage $y$ as a function of the charge/discharge current $i(t)$.
It must be noted that all the circuit components $C_{ts}, C_{tl}, R_{s}, R_{ts},  R_{tl}, E_o, C_c$ are  nonlinear functions of $x_1$ as follows:
\begin{eqnarray}
 C_{ts}&=&-k_4e^{-k_1x_1}+k_3\label{e13} \\
 C_{tl}&=&-k_6e^{-k_2x_1}+k_5\label{e15}\\
 R_s&=&k_7e^{-k_8x_1}+k_9\label{e11}\\
 R_{ts}&=&k_{10}e^{-k_{11}x_1}+k_{12}\label{e12}\\
 %C_{ts}&=&-a_7e^{-a_8x_1}+a_9\label{e13}\\
 R_{tl}&=&k_{13}e^{-k_{14}x_1}+k_{15}\label{e14}\\
 %C_{tl}&=&-a_{13}e^{-a_{14}x_1}+a_{15}\label{e15}\\
 E_{o}&=&-k_{16}e^{-k_{17}x_1}+k_{18}+k_{19}x_1\\\nonumber
      &-&k_{20}{x_1}^2+k_{21}{x_1}^3\label{e16}\\
 C_c&=&3600Cf_1f_2. \label{e17}
\end{eqnarray}
where $k_i>0$ for $i=1,2,...,21$.
In eqn. \eqref{e17} $f_1,f_2\in[0,1]$ are factors taking into account the effects of temperature and charge-discharge cycles respectively. By default, $f_1=f_2=1$, but their values will decrease after each charge-discharge cycle.
The various resistances, capacitances, and constants ($k_1,\cdots,k_{21}$) shown here are independent of $i(t)$. Hence it enables one to experimentally determine these parameters at different stages during the life of a battery \cite{rincon,AbuSharkh,Bernhardt,Coleman}. The experimental data justifies that the model can be applied to applications with acceptable accuracy.

Knauff et.al. \cite{Knauff} provide a state space realization for the above battery model. We have introduced minor modifications to aid our analysis.
\begin{eqnarray}\label{e8}
 \dot{x}_1&=&-\frac{1}{C_c} i\label{e8.2}\\
 \dot{x}_2&=&-\frac{x_2}{R_{ts}C_{ts}}+\frac{i}{C_{ts}}\label{e8.3}\\
 \dot{x}_3&=&-\frac{x_3}{R_{tl}C_{tl}}+\frac{i}{C_{tl}}\label{e8.4}\\
 y&=&E_o-x_2-x_3- i R_s, \label{e8.5}
\end{eqnarray}
where  $y$ represents the voltage output from the battery,  $x_2$ represents the voltage drop across $R_{ts}||C_{ts}$, and  $x_3$ represents the voltage drop across $R_{tl}||C_{tl}$.
\begin{figure}[tp]
\centering
\subfigure[V versus $x_1$]{
      \includegraphics[width=0.44\textwidth, height=5.00cm]{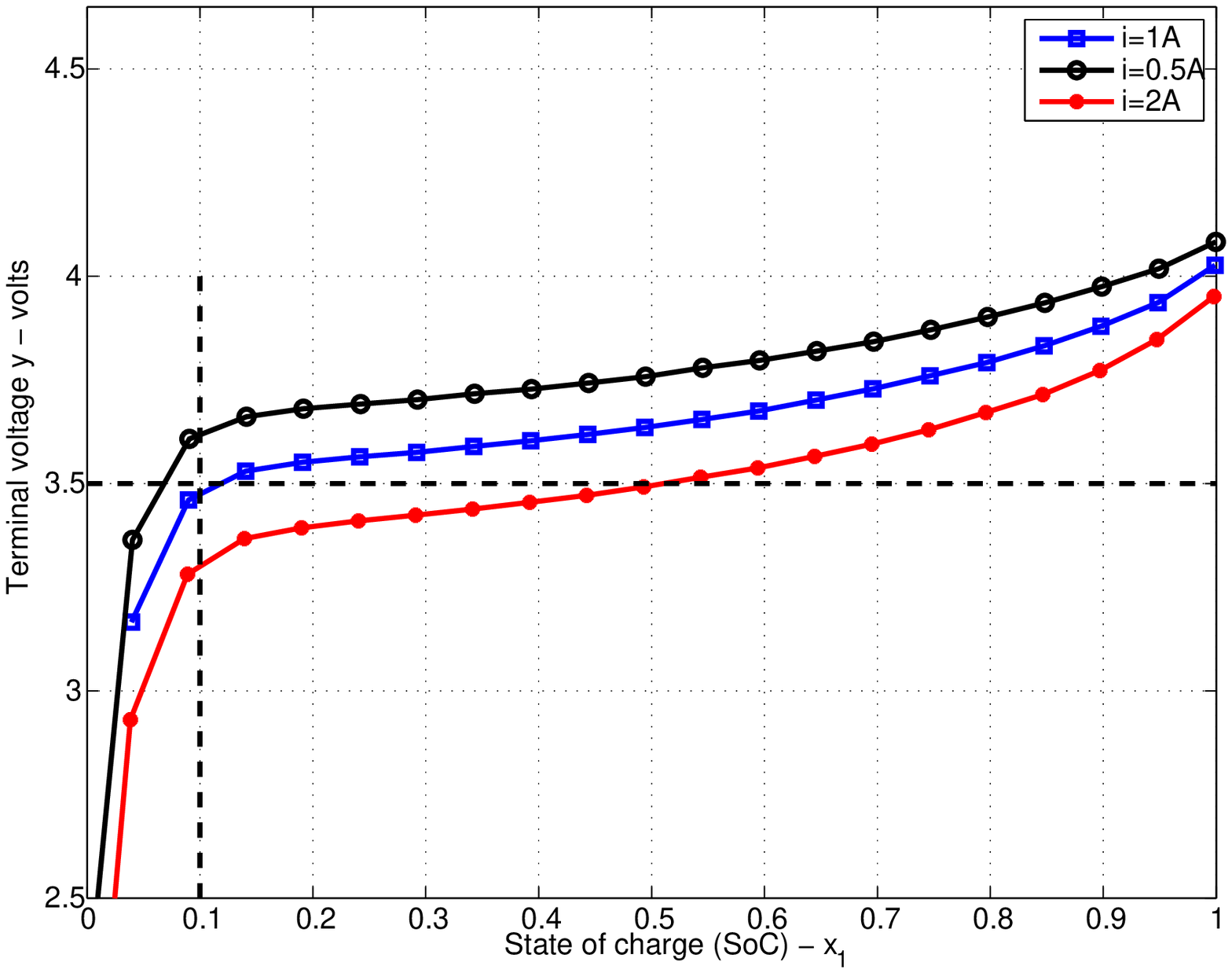}
      %\caption{Battery terminal voltage versus remaining capacity}
      \label{vxt}
}
\subfigure[V versus t]{
      \includegraphics[width=0.49\textwidth]{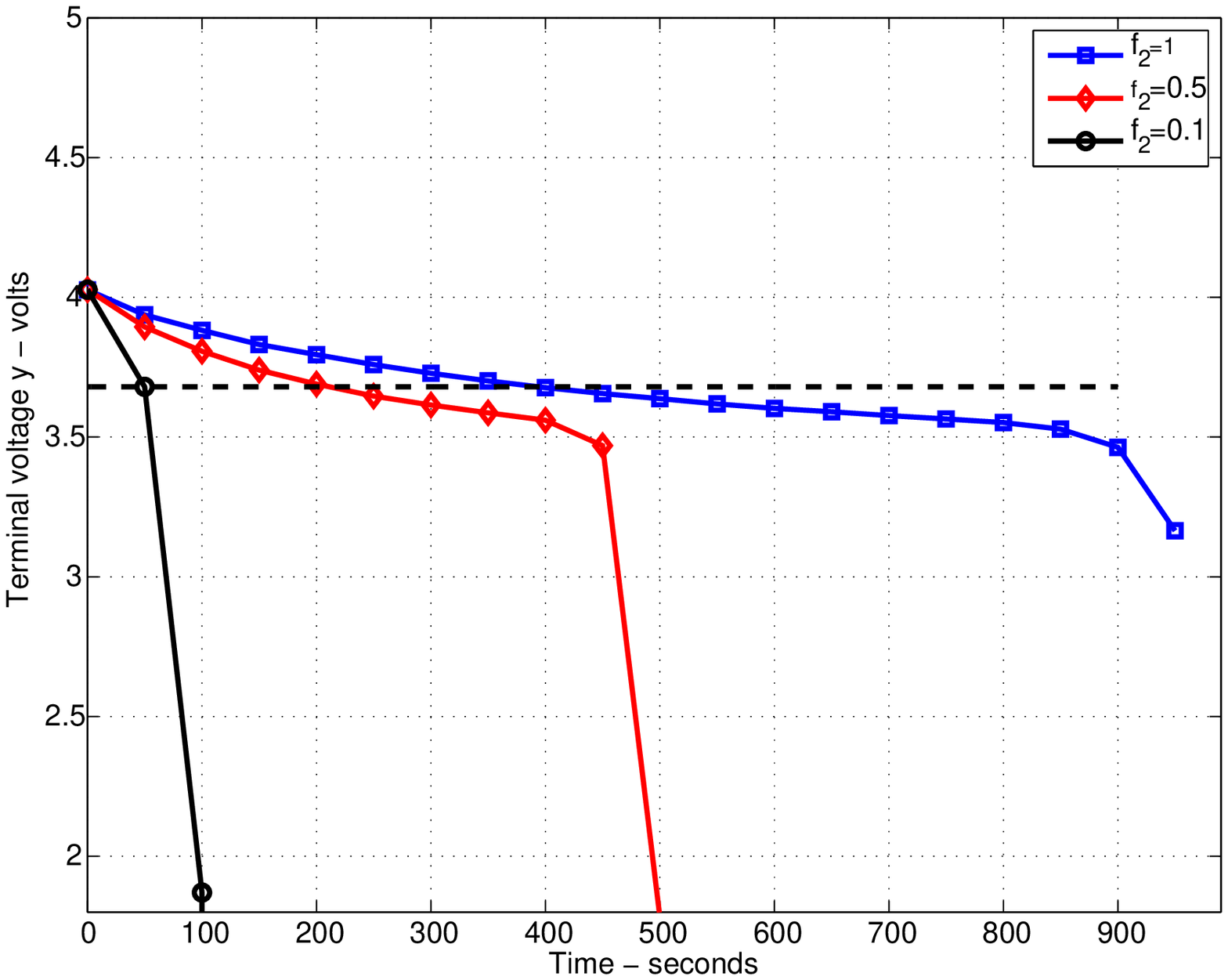}
      \label{vbt}
}
\caption[Caption for subfigures]{The characteristics of battery voltage variations. (a) The changes in battery voltage as a function of SoC for different constant current loads. (b) The variation of battery voltage with respect to time `$t$' for different values of $f_2$ at the same constant current load.}
\end{figure}

\subsubsection{Voltage Thresholding and Capacity Thresholding} Figures \ref{vxt} and \ref{vbt} show typical battery characteristics. One important problem is how to detect battery failure based on these characteristic curves.
The horizontal and vertical dashed lines represent static thresholds on the terminal voltage and the SoC respectively. The Voltage Thresholding (VT) method detects battery failure when the output voltage of the battery drops below a threshold represented by the horizontal line. The Capacity Thresholding (CT) method detects battery failure when the SoC of the battery drops below a threshold represented by the vertical line.

Each curve in Fig. \ref{vxt} shows the relation between the SoC and terminal voltage for a specific constant value of the discharge current. For a load current of $0.5A$ or $1A$ the voltage threshold of $3.7V$ (shown by the horizontal dashed line) detects battery failure when the battery voltage starts declining rapidly. However, for a load current of $2A$, VT detects failure with SoC still at $50\%$. Assuming that the voltage has not fallen below the operational requirements of the system, this would result in switching a battery out of service unnecessarily. The vertical dashed line in fig. \ref{vxt} shows an SoC threshold of $0.1$. For loads of $1A$ and  $2A$,  CT detects failure correctly. But for a lighter load of $0.5A$, CT detects failure even though the terminal voltage is higher than the previously set threshold. Thus the battery is switched out earlier than necessary in this case.

Figure \ref{vbt} shows the variation of battery voltage with respect to time $t$ for different values of $f_2$ at the same constant current load. The horizontal dashed line represents a voltage threshold of $3.5V$. When $f_2=0.1$, VT based on this threshold detects failure right before the terminal voltage starts declining rapidly. However if $f_2=0.5$ or $1$,  VT switches out the battery early since the figure shows that the terminal voltage does not start dropping rapidly for a long time after failure is detected.

VT and CT are  generally used to detect battery failure \cite{BattHandBook,kim_rtas,philipsbook}. From figures \ref{vxt} and \ref{vbt} it is obvious that changes in the load current $i$ and $f_2$ can cause static thresholds to be overly conservative. This can cause batteries to be switched out of the system when there may be a significant amount of usable capacity available. We call this phenomena the {\it false alarm}.  False alarms will reduce the operational life of battery supported systems and increase maintenance cost.

We will design a new algorithm, called the Adaptive Thresholding (AT),  which is able to determine an adaptive threshold that adjusts automatically to the changes in the battery parameters. This further leads us to the notion of robustness of battery switching algorithms.

\subsection{Battery Stability}
We observe that the battery system represented by eqns. \eqref{e8.2}-\eqref{e8.5} looses stability (in the sense of control theory) when the battery terminal voltage  drops suddenly.
Consider the state $x_1$ as a parameter. Temporarily disregarding the input $i$, the system in eqns. \eqref{e8.2}-\eqref{e8.5} can be rewritten using standard state space notation \cite{dis} as the following non-autonomous system,
\begin{equation}
 \begin{bmatrix}{\dot{x}_2}\cr
                {\dot{x}_3}\cr
             \end{bmatrix}=A({x_1})\begin{bmatrix}
                            {x_2}\cr
                            {x_3}\cr
             \end{bmatrix} \mbox{  where } A({x_1})=\begin{bmatrix}{\frac{-1}{C_{ts}R_{ts}}}&{0}\cr
        {0}&{\frac{-1}{C_{tl}R_{tl}}}
        \end{bmatrix}\label{exfunc1}.
\end{equation}
The above representation simplifies the nonlinear model of a battery to a linear time-varying model.

Consider $C_{ts}$ and $C_{tl}$ for our battery model where $k_1,\cdots,k_6$ satisfy the condition $
 0<k_1<k_2<k_3<k_4<k_5<k_6$.
 Regarding  eqn. \eqref{exfunc1},  our first stability result is based on the following candidate Lyapunov function and its time derivative:
\begin{eqnarray}
 V_1&=&\frac{1}{2}(x_2^2+x_3^2)\label{lyapfunc1}\\
 \dot{V}_1&=&-\bigg(\frac{x_2^2}{R_{ts}C_{ts}}+\frac{x_3^2}{R_{tl}C_{tl}}\bigg).\label{lyapderiv1}
\end{eqnarray}
\begin{lemma}\label{thm-lyapthm1}
Consider $C_{ts},C_{tl},R_{ts},R_{tl}$, $V_1$, and $\dot{V}_1$ in equations \eqref{e13}-\eqref{e14}, \eqref{lyapfunc1}, and \eqref{lyapderiv1}  respectively.
Assuming that $\frac{1}{k_1}\ln\big(\frac{k_3}{k_4}\big)>\frac{1}{k_2}\ln\big(\frac{k_5}{k_6}\big)$, for the
SoC
$x_1 \in [0,1]$ and discharge current $i(t)>0$, there exist small positive numbers $\{(\delta_1,\delta_2)|0<\delta_1<\delta_2\}$ such that
	$\dot{V}_1 > 0$ for  $x_1 \in (0,\delta_1)$ and $\dot{V}_1 \leq 0$ for $x_1 \in (\delta_2, 1]$.
\end{lemma}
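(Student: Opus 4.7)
The plan is to read $\dot V_1$ directly from \eqref{lyapderiv1} and observe that its sign is governed entirely by the signs of the two time-constant products $R_{ts}C_{ts}$ and $R_{tl}C_{tl}$, weighted by the non-negative quantities $x_2^2$ and $x_3^2$. The resistances $R_{ts}(x_1)$ and $R_{tl}(x_1)$ in \eqref{e12} and \eqref{e14} are an exponential plus a positive constant, hence strictly positive on $[0,1]$. So the whole sign question collapses onto the two capacitance functions $C_{ts}(x_1)$ and $C_{tl}(x_1)$.

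First I would pinpoint the zeros of $C_{ts}$ and $C_{tl}$. Differentiating gives $C_{ts}' = k_1 k_4 e^{-k_1 x_1} > 0$ and $C_{tl}' = k_2 k_6 e^{-k_2 x_1} > 0$, so each function is strictly increasing with a single root, namely $x_1^{\ast} = \tfrac{1}{k_1}\ln(k_4/k_3)$ for $C_{ts}$ and $x_1^{\ast\ast} = \tfrac{1}{k_2}\ln(k_6/k_5)$ for $C_{tl}$. Each is negative below its root and positive above it. The standing chain $0 < k_1 < k_2 < k_3 < k_4 < k_5 < k_6$ makes both roots strictly positive, and the hypothesis $\tfrac{1}{k_1}\ln(k_3/k_4) > \tfrac{1}{k_2}\ln(k_5/k_6)$ is, after negating both sides, exactly $x_1^{\ast} < x_1^{\ast\ast}$. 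Thus the capacitance $C_{ts}$ changes sign strictly before $C_{tl}$ as $x_1$ grows.

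The lemma is then an immediate case check. On $(0, x_1^{\ast})$ both $C_{ts}$ and $C_{tl}$ are negative, so both summands inside the parentheses of \eqref{lyapderiv1} are negative, giving $\dot V_1 > 0$ whenever $(x_2,x_3) \neq (0,0)$; choose any $\delta_1 \in (0, x_1^{\ast})$. On $(x_1^{\ast\ast}, 1]$ both capacitances are positive, so $\dot V_1 \le 0$; choose any $\delta_2 \in [x_1^{\ast\ast}, 1)$. The separation $0 < \delta_1 < \delta_2$ is automatic. (I would flag as a tacit modelling assumption that $x_1^{\ast\ast} < 1$, i.e. $k_6 < k_5 e^{k_2}$, so that $[x_1^{\ast\ast},1)$ is nonempty; otherwise the second conclusion is vacuous.)

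The only real subtlety, and the part I expect to draw scrutiny, is the strictness of $\dot V_1 > 0$: at the trivial state $(x_2, x_3) = (0,0)$ one only has equality. I would handle this in the standard Lyapunov-analysis reading — the inequality is pointwise, with the trivial equilibrium excluded — and use the standing assumption $i(t) > 0$ together with the forced equations \eqref{e8.3}--\eqref{e8.4} to remark that $\dot x_2 = i/C_{ts}$ and $\dot x_3 = i/C_{tl}$ are nonzero there, so any trajectory is instantly driven off $(0,0)$ and the strict inequality is what one actually encounters along solutions. Beyond this observation the argument is just the elementary monotonicity of the two exponentials; no invariant set or comparison-lemma machinery is required.
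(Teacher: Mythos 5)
Your proof is correct and follows essentially the same route as the paper's: both reduce the sign of $\dot V_1$ to the signs of $C_{ts}$ and $C_{tl}$ (since $R_{ts},R_{tl}>0$), locate their zero crossings at $\frac{1}{k_1}\ln(k_4/k_3)$ and $\frac{1}{k_2}\ln(k_6/k_5)$, and use the stated hypothesis to order them as $\delta_1<\delta_2$. Your added remarks on the degenerate state $(x_2,x_3)=(0,0)$ and on the tacit requirement $\frac{1}{k_2}\ln(k_6/k_5)<1$ address details the paper glosses over, but they do not change the argument.
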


\begin{proof}\label{proof-lyapthm1}
We observe that $V_1>0$, for all $x_2,x_3 \neq 0$. Since
$R_{ts},R_{tl}$ have the form $ae^{-bx_1}+c$, where $a,b,c>0$, then
	$R_{ts},R_{tl} >0$ for all  $x_1$. Consider the case when $C_{ts}<0$. Solving eqn. \eqref{e13} for $x_1$ gives,
$
	x_1<-\frac{1}{k_1}\ln\bigg(\frac{k_3}{k_4}\bigg).
$ Similarly, considering $C_{tl}<0$ and solving eqn. \eqref{e15} for $x_1$ gives
\begin{equation}
	x_1<-\frac{1}{k_2}\ln\bigg(\frac{k_5}{k_6}\bigg).   \label{ctlcond1}
\end{equation} Let us define $\delta_1$ and $\delta_2$ as follows,
\begin{eqnarray}
 \delta_1&=&-\frac{1}{k_1}\ln\bigg(\frac{k_3}{k_4}\bigg)\label{d1},\\
 \delta_2&=&-\frac{1}{k_2}\ln\bigg(\frac{k_5}{k_6}\bigg).\label{d2}
\end{eqnarray}Since $k_3<k_4$ and $k_5<k_6$, we have $\delta_1,\delta_2>0$. Based on our assumptions we further have,
$0<\delta_1<\delta_2$. Therefore, if $x_1<\delta_1$ then $C_{ts}, C_{tl}<0$, which makes $\dot{V}_1$ positive. Similarly if $x_1>\delta_2$ then $C_{ts}, C_{tl}>0$ and $\dot{V}_1$ is negative. We have proved the existence of $\delta_1$ and $\delta_2$.
\end{proof}
From the above proof, it is observed that the battery is unstable (in the Lyapunov sense \cite{Khalil}) when $x_1\in (0,\delta_1)$. When $x_1 \in (\delta_2, 1]$ the battery is stable. $\delta_1$ thus provides the worst case limit for
the SoC
of a battery. If the
SoC
falls below $\delta_1$, one must switch a battery out of service, otherwise the output voltage will soon drop below any specified bound.
Note that the representation in eqn. \eqref{exfunc1} simply aids in establishing the stability limits and is not used to explicitly replicate the dynamics. Hence it does not introduce any error. These limits are applicable even to the system in eqns. \eqref{e8}-\eqref{e8.5}.

The following claim can be made based on the previous lemma.
\begin{claim}\label{corol-lyapthm1}
If $x_1<\delta_2$, where $\delta_2$ is obtained from lemma \ref{corol-lyapthm1}, then the Li-ion battery system represented by equation \eqref{exfunc1} is not asymptotically stable.
\end{claim}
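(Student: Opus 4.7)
The plan is to exploit the fact that $A(x_1)$ in eqn.~\eqref{exfunc1} is diagonal, so when $x_1$ is treated as a fixed parameter, the $(x_2,x_3)$-subsystem is a linear time-invariant system whose asymptotic stability is fully determined by the signs of the two diagonal entries $-1/(C_{ts}R_{ts})$ and $-1/(C_{tl}R_{tl})$. Since $R_{ts}$ and $R_{tl}$ have the form $ae^{-bx_1}+c$ with $a,b,c>0$ (so they are strictly positive for every $x_1$, as already noted in the proof of Lemma~\ref{thm-lyapthm1}), the sign of each eigenvalue is governed entirely by the sign of the corresponding capacitance.

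The key observation is then to extract from the proof of Lemma~\ref{thm-lyapthm1} the sharper statement that $C_{tl}<0$ \emph{if and only if} $x_1<\delta_2$, which is exactly inequality~\eqref{ctlcond1} used there. Hence for every $x_1<\delta_2$ we have $C_{tl}<0$, which forces the eigenvalue $-1/(C_{tl}R_{tl})$ to be strictly positive. A diagonal linear system with a positive eigenvalue cannot be asymptotically stable: the $x_3$ component grows exponentially along the trajectory $\dot{x}_3=-x_3/(R_{tl}C_{tl})$, and consequently $(x_2,x_3)\to(0,0)$ fails for any initial condition with $x_3(0)\neq 0$.

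I would structure the write-up in three short steps: (i) rewrite $A(x_1)$ explicitly and list its eigenvalues; (ii) invoke the sign analysis of $C_{tl}$ from Lemma~\ref{thm-lyapthm1} to conclude that $-1/(C_{tl}R_{tl})>0$ whenever $x_1<\delta_2$; (iii) apply the standard fact that a linear autonomous system with a positive eigenvalue is not asymptotically stable (equivalently, contradict the necessary condition that all eigenvalues have negative real parts). For completeness one may also note that when $x_1<\delta_1$ both eigenvalues are positive, recovering the instability already implied by $\dot{V}_1>0$ in Lemma~\ref{thm-lyapthm1}, whereas on the intermediate interval $[\delta_1,\delta_2)$ only the $x_3$-mode is unstable while the $x_2$-mode decays.

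The main subtlety—the only place where one might trip up—is the intermediate band $[\delta_1,\delta_2)$, where Lemma~\ref{thm-lyapthm1} gives no sign information about $\dot{V}_1$ because the quadratic form has indefinite sign. The diagonal-eigenvalue route sidesteps this obstacle cleanly: even when the Lyapunov function $V_1$ of Lemma~\ref{thm-lyapthm1} is not decreasing and not increasing monotonically along trajectories, the explicit closed-form solution of the decoupled scalar ODE for $x_3$ still certifies the existence of a diverging mode, which is all that is needed to rule out asymptotic stability.
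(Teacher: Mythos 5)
Your proof is correct and follows essentially the same route as the paper's: both arguments read off the eigenvalues $-1/(C_{ts}R_{ts})$ and $-1/(C_{tl}R_{tl})$ of the diagonal matrix $A(x_1)$ and use the sign analysis of $C_{tl}$ from Lemma~\ref{thm-lyapthm1} to exhibit a non-negative eigenvalue whenever $x_1<\delta_2$. If anything, your handling of the intermediate band $[\delta_1,\delta_2)$ is more careful than the paper's one-line assertion that \emph{both} eigenvalues fail to have negative real parts, since there $C_{ts}>0$ makes the $x_2$-mode stable and only the $x_3$-mode diverges, which, as you note, is all that is needed.
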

\begin{proof}\label{proof-corol-lyapthm1}
 From eqns. \eqref{exfunc1}-\eqref{d2} it is obvious that if $x_1<\delta_2$, the two eigenvalues of $A({x_1})$ do not have  negative real parts. Hence the system is not asymptotically stable.
\end{proof}
This claim indicates that switching out a battery when $x_1<\delta_2$ is safer than switching out the battery later when $x_1<\delta_1$. Therefore, $\delta_2$ can now be viewed as a threshold for the
SoC of a battery
to indicate when a battery needs to be switched out. Note that $\delta_2$ does not depend on the discharge current $i(t)$.

Next, we develop an adaptive threshold that depends on $i(t)$.
We consider the nonlinear battery model represented by eqns. \eqref{e8.2}-\eqref{e8.4} with the input current $i(t)$. Let us consider the following candidate Lyapunov function and its time derivative.
\begin{eqnarray}
 V_2&=&\frac{1}{2}(x_1^2+x_2^2+x_3^2)\label{lyapfunc2}\\
 \dot{V}_2&=&i\bigg(\frac{x_2}{C_{ts}}+\frac{x_3}{C_{tl}}-\frac{x_1}{C_c}\bigg)-\bigg(\frac{x_2^2}{R_{ts}C_{ts}}+\frac{x_3^2}{R_{tl}C_{tl}}\bigg). \label{lyapderiv2}
\end{eqnarray}

\begin{lemma}\label{thm-lyapthm2}
Consider $C_{ts}, C_{tl}, R_{ts}, R_{tl}$, $V_2$, and $\dot{V}_2$ defined in eqns. \eqref{e13}-\eqref{e14},  \eqref{lyapfunc2}, and \eqref{lyapderiv2}. Consider $\delta_2$ obtained from Lemma \ref{thm-lyapthm1}. For the
SoC
$x_1 \in [0,1]$  and $R_{ts}, R_{tl}$, $C_{ts}, C_{tl}, x_2, x_3 >0$,  there exist a small positive lower bound $\epsilon(x_2,x_3)$ for the discharge current $i(t)$ and a threshold $\beta(x_2,x_3,i)$ for $x_1$ such that $\delta_2<\beta<1$ and the following two statements hold: \textbf{(1)} $\dot{V}_2 > 0$ if $x_1<\beta$ and $i>\epsilon$; \textbf{(2)} $\dot{V}_2 \leq 0$, if $x_1 \geq \beta$ and $i>\epsilon$.
\end{lemma}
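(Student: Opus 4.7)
The plan is to treat $\dot V_2$ as a continuous function of $x_1$ on $(\delta_2,1]$ with $x_2,x_3>0$ and $i>0$ held as parameters, and then locate the sign change through a boundary analysis followed by an intermediate-value argument, in the same spirit as the proof of Lemma~\ref{thm-lyapthm1}. The first step is to regroup $\dot V_2$ so that the two singular channels are separated from the bulk term,
\[
\dot V_2 \;=\; \frac{x_2}{C_{ts}(x_1)}\!\left(i-\frac{x_2}{R_{ts}(x_1)}\right)\;+\;\frac{x_3}{C_{tl}(x_1)}\!\left(i-\frac{x_3}{R_{tl}(x_1)}\right)\;-\;\frac{i\,x_1}{C_c},
\]
making explicit the singular behaviour at $x_1=\delta_2$, where $C_{tl}(\delta_2)=0$ as observed in the proof of Lemma~\ref{thm-lyapthm1}.

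Next I would analyse the two boundary regimes. As $x_1\to\delta_2^+$ the second summand dominates and behaves like $\tfrac{x_3}{C_{tl}(x_1)}\bigl(i-\tfrac{x_3}{R_{tl}(\delta_2)}\bigr)$; hence the choice
\[
\epsilon(x_2,x_3)\;:=\;\frac{x_3}{R_{tl}(\delta_2)}
\]
(or any strictly larger value) forces $\dot V_2\to+\infty$ as $x_1\to\delta_2^+$ whenever $i>\epsilon$, so $\dot V_2>0$ in a right neighbourhood of $\delta_2$. At the upper end $x_1=1$ both $C_{ts}(1)$ and $C_{tl}(1)$ attain their largest values (both being increasing in $x_1$), so the first two summands are uniformly bounded while the penalty $-i\,x_1/C_c$ grows linearly in $i$; enlarging $\epsilon$ if necessary then forces $\dot V_2(1)\le 0$.

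Combining the two estimates with continuity of all coefficient functions on $(\delta_2,1]$, the intermediate value theorem produces a zero of $\dot V_2$ strictly inside $(\delta_2,1)$, and setting
\[
\beta(x_2,x_3,i)\;:=\;\inf\{\,x_1\in(\delta_2,1]:\ \dot V_2(x_1)\le 0\,\}
\]
realises both conditions~(1) and~(2) of the lemma with $\delta_2<\beta<1$.

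The main obstacle is ruling out oscillations so that $\dot V_2>0$ holds throughout the whole interval $(\delta_2,\beta)$ and not merely in a neighbourhood of $\delta_2$. I expect this to be handled by a monotonicity argument: on $(\delta_2,1]$ the factors $1/C_{ts}$ and $1/C_{tl}$ are strictly decreasing while $x_2/R_{ts}$ and $x_3/R_{tl}$ are strictly increasing, because $C_{ts},C_{tl}$ are increasing and $R_{ts},R_{tl}$ are decreasing in $x_1$; together with the strictly decreasing penalty $-i\,x_1/C_c$ this should yield strict monotonicity of $\dot V_2$ in $x_1$ and hence uniqueness of the crossing $\beta$. The delicate bookkeeping is near any $x_1$ at which one of the bracketed factors $(i-x_2/R_{ts})$ or $(i-x_3/R_{tl})$ changes sign, and that is the point at which the choice of $\epsilon$ must be refined to preserve the decrease of each summand.
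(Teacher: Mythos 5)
Your route is genuinely different from the paper's, and it contains two gaps that I do not think can be closed as written. The paper performs no boundary-plus-IVT analysis at all: it treats the inequality $\dot V_2>0$ in \eqref{gr1} as a purely algebraic condition, isolates $x_1$, and \emph{defines} $\beta$ to be the resulting explicit expression \eqref{betadef}, so that statements (1) and (2) hold essentially by construction (the rearrangement is an equivalence once $i>0$). The lower bound $\epsilon$ in \eqref{epsi} is then obtained by solving $\beta=0$ for $i$, so its role is to keep $\beta$ positive, and $\beta>\delta_2$ is shown by contradiction from the standing hypothesis $C_{tl}>0$. Your $\epsilon=x_3/R_{tl}(\delta_2)$ serves an entirely different purpose (controlling the sign of the blow-up at $\delta_2^+$) and is not the paper's threshold.

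The first gap is at the upper endpoint. You assert that enlarging $\epsilon$ forces $\dot V_2(1)\le 0$ because the penalty $-i\,x_1/C_c$ grows linearly in $i$; but in your own regrouping the net coefficient of $i$ at $x_1=1$ is $x_2/C_{ts}(1)+x_3/C_{tl}(1)-1/C_c$, and nothing in the hypotheses makes this negative. Since $C_c=3600Cf_1f_2$ is very large, the typical regime is $x_2/C_{ts}(1)+x_3/C_{tl}(1)>1/C_c$, in which case raising the lower bound on $i$ pushes $\dot V_2(1)$ \emph{up}, not down; making $\dot V_2(1)\le 0$ would then require an \emph{upper} bound on $i$, which the lemma does not impose. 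Without a guaranteed sign change there is nothing for the intermediate value theorem to find. (This is also why the paper itself never actually argues $\beta<1$.) The second gap is the monotonicity step you rely on to rule out oscillation: when one of the bracketed factors $i-x_2/R_{ts}$ or $i-x_3/R_{tl}$ is negative, the corresponding summand is a product of a positive decreasing function and a negative decreasing function, whose derivative $f'g+fg'$ has indeterminate sign; you flag this as ``delicate bookkeeping'' but do not resolve it, so uniqueness of the crossing --- and hence property (1) on all of $(\delta_2,\beta)$ --- is not established. The direct repair is to drop the IVT scaffolding and take the paper's explicit $\beta$ of \eqref{betadef} as the definition, from which both sign statements follow immediately.
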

\begin{proof}\label{proof-lyapthm2}
Considering $\dot{V}_2>0$ we have,
\begin{equation}
 i\bigg(\frac{x_2}{C_{ts}}+\frac{x_3}{C_{tl}}-\frac{x_1}{C_c}\bigg)-\bigg(\frac{x_2^2}{R_{ts}C_{ts}}+\frac{x_3^2}{R_{tl}C_{tl}}\bigg)>0\label{gr1}.
\end{equation}
Solving eqn. \eqref{gr1} for $x_1$ gives,
\begin{equation}
 x_1<C_c\Bigg(\frac{x_2}{C_{ts}}+\frac{x_3}{C_{tl}}-\frac{1}{i}\bigg(\frac{x_2^2}{R_{ts}C_{ts}}+\frac{x_3^2}{R_{tl}C_{tl}}\bigg)\Bigg)\label{gr2}.
\end{equation}
Let us define the quantity on the right-hand side of eqn. \eqref{gr2} as $\beta$,
\begin{equation}
	\beta \triangleq C_c\Bigg(\frac{x_2}{C_{ts}}+\frac{x_3}{C_{tl}}-\frac{1}{i}\bigg(\frac{x_2^2}{R_{ts}C_{ts}}+\frac{x_3^2}{R_{tl}C_{tl}}\bigg)\Bigg)\label{betadef}.
\end{equation}
From eqns. \eqref{gr2} and \eqref{betadef} we have $\dot{V_2}>0$ when $x_1<\beta$. Similarly, we can see that $\dot{V_2}\leq 0$ when $x_1\geq\beta$.

From eqn. \eqref{betadef} it is obvious that for very small positive values of the discharge current $i$, the value of $\beta$ will turn out negative. Solving eqn. \eqref{betadef} for the current $i$ when $\beta=0$ provides the lower bound $\epsilon$ for the discharge current.
\begin{equation}
 \epsilon=\bigg(\frac{x_2^2}{R_{ts}C_{ts}}+\frac{x_3^2}{R_{tl}C_{tl}}\bigg)\bigg/\bigg(\frac{x_2}{C_{ts}}+\frac{x_3}{C_{tl}}\bigg)\label{epsi}
\end{equation}
As per claim \ref{corol-lyapthm1}, stability of the battery system requires $x_1\geq\delta_2$. Hence we proceed to prove $\beta>\delta_2$ by contradiction.
Let us temporarily assume that $\beta\leq \delta_2$. Hence from eqn. \eqref{gr2} we have $x_1\leq\delta_2$. However, from eqns. \eqref{ctlcond1} and \eqref{d2} we have that $C_{tl} \leq 0$ if $x_1\leq \delta_2$. Thus assuming $\beta\leq \delta_2$ contradicts the condition $C_{tl}>0$. Hence by contradiction we have $ \beta > \delta_2$.
Thus proving the existence of $\epsilon(x_2,x_3)$  and $\beta(x_2,x_3,i)$.
\end{proof}

The above result provides an adaptive threshold $\beta$ for $x_1$.  Adaptive control theory \cite{krstic} serves as an inspiration for this design.
The threshold $\beta$ dynamically adjusts itself to account for the number of charge-discharge cycles and varying current. Since $\beta>\delta_2$, $\beta$ provides a more conservative threshold than $\delta_2$ for switching a battery out of service. From eqn. \eqref{betadef} we see that the states $x_2$ and $x_3$ are required to calculate $\beta$, while $\beta$ gives the threshold for $x_1$. Hence all the three states need to be estimated. We discretize the model given by eqns. \eqref{e8.2}-\eqref{e8.5} and run a particle filter to estimate the battery states. Satisfactory results from the particle filter have been observed, which are not presented in this paper since they are less relevant.
Particle filtering is one of many approaches to state estimation. We use particle filtering because of the presence of nonlinearities in the battery system. Although computationally complex, the emerging new generation multi-core embedded systems may offer the required computational capability. Other methods like extended Kalman filtering (EKF) \cite{Barbarisi06} which are computationally simpler can be used, although it may result in early/late switching out of a battery due to errors in the estimates.

\subsection{Robust Battery Switching\label{bm3}}

Claim \ref{corol-lyapthm1} provides the threshold $\delta_2$ for $x_1$ below which at least one of the eigenvalues of $A(x_1)$ has a positive real part. We have shown that when $x_1<\delta_2$, the battery will become unstable, indicating that the condition of the battery has degraded. We can use this threshold for measuring robustness of battery switching algorithms.
Variations in the battery discharge, the SoC, and the parameters can be viewed as perturbations to battery management algorithms.
\begin{definition} \label{brobust}
  A battery management  algorithm is {\em robust} if it guarantees that at the switching time instant when the battery is replaced,  the SoC of the battery
is above the threshold $\delta_2$ e.g. $x_1\geq\delta_2$.
\end{definition}

We develop a robust and adaptive switching algorithm, called the Adaptive Thresholding (AT),  to switch out batteries close to the end of their lives. In Algorithm \ref{algorithm:switch} we use the following quantities: $h$ is the sampling interval in seconds, $k$ is the time step at which the discharge current $i(t)$ and the battery output voltage $V$ are measured, $\tau_s$ is the battery switching time instant and $S=1$ indicates switching is necessary.

Our battery switching algorithm based on Lemma \ref{thm-lyapthm2} provides a threshold $\beta$. This threshold $\beta$ adjusts itself to perturbations in the SoC and the battery parameters so that $\beta<\delta_2$ is always satisfied. Hence our algorithm is robust by Definition \ref{brobust}.

\begin{algorithm}[htbp]
\caption{Determine Battery Switching Time Instant $\tau_s$}\label{algorithm:switch}
\KwData{$y(k), i(k), \epsilon$}
\KwResult{$S=[0,1], \tau_s$}
\BlankLine
\lnl{}$[\hat{x}_1,\hat{x}_2,\hat{x}_3]=ParticleFilter(y(k),i(k))$\;
\lnl{} Compute $\beta$ and $\epsilon$ using equations \eqref{betadef} and \eqref{epsi}\;
\lnl{}\If{$i(k)>\epsilon$}{
\lnl{}\If{$\hat{x}_1<\beta$}{
\lnl{}$S=1,\tau_s=hk$;}
\lnl{}\Else{
\lnl{}$S=0,\tau_s=-1$;}}
\lnl{}\Return $S,\tau_s$\;
\end{algorithm}

\section{Application} \label{sec:application}
To demonstrate the relevance of the robustness analysis for CPSb, we study a simplified scenario as shown in Figure \ref{fig:CPS}. Processor 1 issues control commands to the motors on the bases of multiple inverted pendulums.
Processor 2 runs the dynamic schedulability test and evaluates the particle filter that estimates the
SoC
of the battery based on measurements taken for the terminal voltage and the discharge current. We assume that Processor 2 implements the dynamic schedulability test described in section \ref{sec:schedulabilitytest} and the battery management algorithm described in section \ref{bm3}. When the
SoC
of a battery is below a specific threshold, the working battery will be disconnected and the other fully charged battery is switched in. We simulate this scenario since it simplifies real systems where computing of real-time control tasks are typically separated from battery management circuits.  Performing the schedulability test on a second processor can reduce the overhead on the first processor, where the real-time tasks are scheduled. The separation can be implemented by a dual processor system with the ability of programming each processor independently.

\begin{figure}[tp]
\centering
\includegraphics[width=0.4\textwidth]{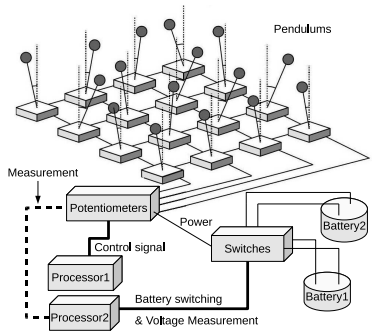}
\caption{A two battery-powered bi-processor system controlling multiple pendulums with different physical parameters.} \label{fig:CPS}
\end{figure}

The separation of the control and battery management on different processors does not conflict with the spirit of co-design. In fact, the control and scheduling on Processor 1 determines the battery discharge current that will affect the battery management
algorithm on Processor 2. Through simulations based on this system, we demonstrate robustness of the system subject to both timing perturbations and discharge perturbations.

\subsection{Real-time Tasks and Currents}
Suppose three pendulums are controlled by  control signals $u_1$, $u_2$ and $u_3$. These  control signals are computed using methods in \cite{ZWS_RTSS08}.
The three controllers implemented on Processor 1 can be viewed as three independent real-time tasks $\Gamma=\{\tau_1, \tau_2, \tau_3\}$ that need to be scheduled.

At the design phase, we assume that $\{\tau_n\}_{n=1}^{3}$ are periodic tasks with the nominal computing times  $[C^{\rm nom}_{1}(t), C^{\rm nom}_{2}(t), C^{\rm nom}_{3}(t)]=[4, 4, 4]$ms and they are scheduled under the RMS algorithm. By solving a minimization problem as introduced in \cite{ZWS_RTSS08}, we can determine the task periods to be $[T^{\rm nom}_{1}(t), T^{\rm nom}_{2}(t), $ $T^{\rm nom}_{3}(t)]=[15.4, 20.8, 30.3]$ms. In this scenario, the task periods are fixed once chosen, i.e. $T_{n}(t)=T^{\rm nom}_n(t)$ for $n=1,2,3$.  The control signals are kept constant during one task period and only updated at the end of each period. However, during runtime, $\{C_n(t)\}_{n=1}^{3}$ may deviate from $\{C^{\rm nom}_n(t)\}_{n=1}^{3}$ due to online perturbations. Moreover, if a task cannot finish the computation by its deadline, the control output will not update at the end of this period.

Assume that the online perturbations on the computing time $\{C^{\rm nom}_n(t)\}_{n=1}^{3}$ are generated from a stochastic processes $\mathcal{E}(t)$ with their value at each point in time being random variables that are uniformly distributed within $[-1.5, 4]$ms, $[-1, 4]$ms and $[-1, 2]$ms. Suppose the sample value of $\mathcal{E}(t)$ within $[10,13]$s are known at time $t=10$s. Then, we have the actual task characteristics $[T_1(t), T_2(t), T_3(t)]=[T_1^{\rm nom}(t), T_2^{\rm nom}(t), T_3^{\rm nom}(t)]$ and $[C_1(t), C_2(t), C_3(t)]=[C_1^{\rm nom}(t)+\epsilon_1(t), C_2^{\rm nom}(t)+\epsilon_2(t), C_3^{\rm nom}(t)+\epsilon_3(t)]$ for $t\in[10, 13]$. To check the schedulability under the perturbations, the scheduled behavior of the real-time system is shown in Fig \ref{fig:RMSbehavior},  and the result of  Algorithm \ref{algorithm:robustness} is shown in Fig.\ref{fig:RMSschedulability}. In Fig \ref{fig:RMSbehavior}, we observe that the value of $\Phi_3(t)$ does not fall back to zero before its deadline at $t=11.8475$s, which implies that the computation of $\tau_3$ fails to finish by its deadline. As we can see from the result of the dynamic schedulability test, ${\rm DS}_3(t)=0$ when $t\in [11.817, 11.8475]$s, which indicates that $\tau_3$ is  not schedulable within $[11.817, 11.8475]$s.

\begin{figure}[tp]
\centering
\subfigure[Scheduled Behavior of $\Gamma$]
{\includegraphics[width=0.40 \textwidth]{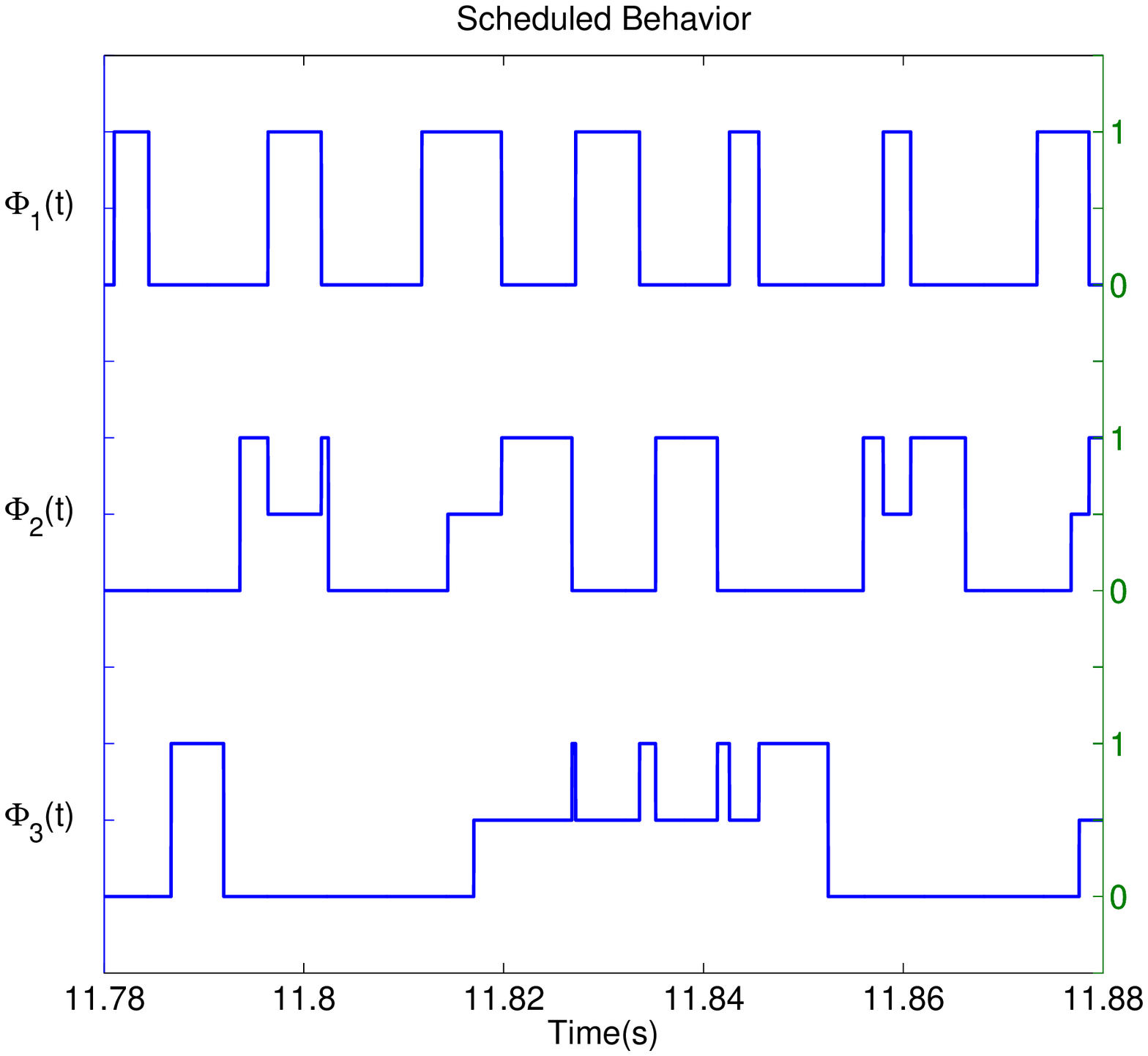}
  \label{fig:RMSbehavior}
 }
\subfigure[Dynamic Schedulability Test for $\Gamma$]
{\includegraphics[width=0.40 \textwidth]{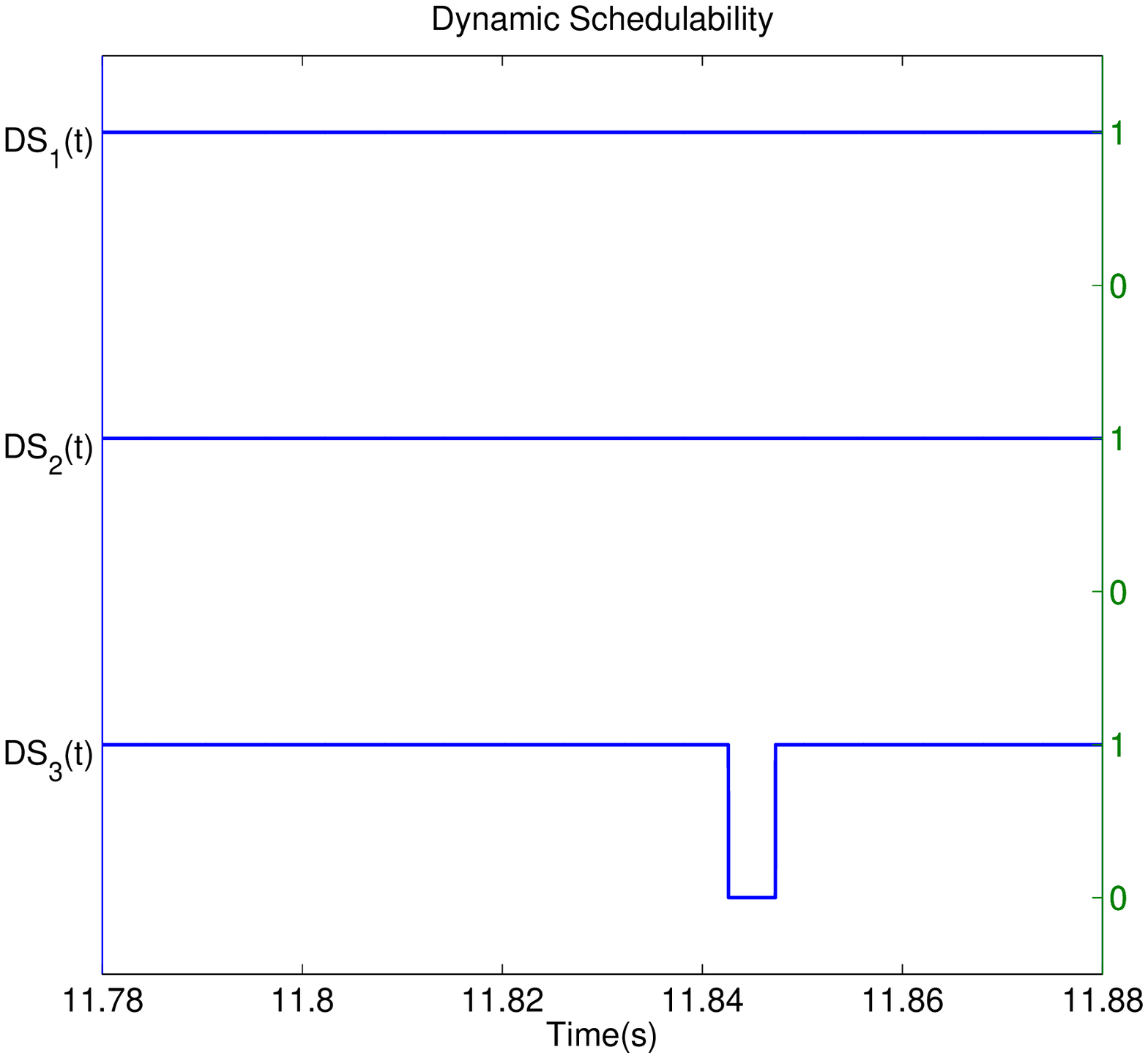}
  \label{fig:RMSschedulability}
 }
 \caption{The pendulum system under the RMS algorithm  subject to perturbations $\mathcal{E}(t)$}
\end{figure}

We assume that the pendulums are powered by permanent magnet DC shunt motors. The motors provide torque directly proportional to the current supplied \cite{machines}. The total load current drawn (ideally) from the battery can be written as:
$i_{tot}=P(|u_1|+|u_2|+|u_3|)+i_{p_1}+i_{p_2}$.
We explain each term and how they are determined:
\begin{enumerate}
 \item   $P$ is the constant of proportionality relating the torque to the current drawn. For simplicity we assume that the constant is the same for the three motors. We also choose $P=0.1$ for purposes of simulation. In reality this constant will change based on motor parameters and needs to be determined experimentally.

 \item  We assume that the first processor consumes an average of 400mA when it is computing and 200mA when it is idle. Hence the current absorbed by the first processor is $i_{p_1}=(300+100\Phi_{\rm cpu})$mA, as shown in Fig. \ref{fig:ip1}. It is easy to verify that the result in Fig. \ref{fig:ip1} is consistent with the result of Fig. \ref{fig:RMSbehavior} in that $\Phi_{\rm cpu}= sgn(\Phi_1+\Phi_2+\Phi_3)$.

 \item We assume that Processor 2 consumes $i_{p_2}=300mA$ constantly.
  \end{enumerate}

Using the dynamic timing model and the controller models, we can predict the total load current  supplied by the battery within $[10, 13]$s  at time 10s, as shown in Fig. \ref{totload}. In real life the current waveform may have small transient effects that are ignored here. We want to emphasize that all our methods developed in this paper and in \cite{ZWS_RTSS08} are analytical, hence the waveforms can be obtained {\it analytically}.

\begin{figure}[tp]
\centering
\subfigure[Current absorbed by the first processor]{
      \includegraphics[width=0.40\textwidth]{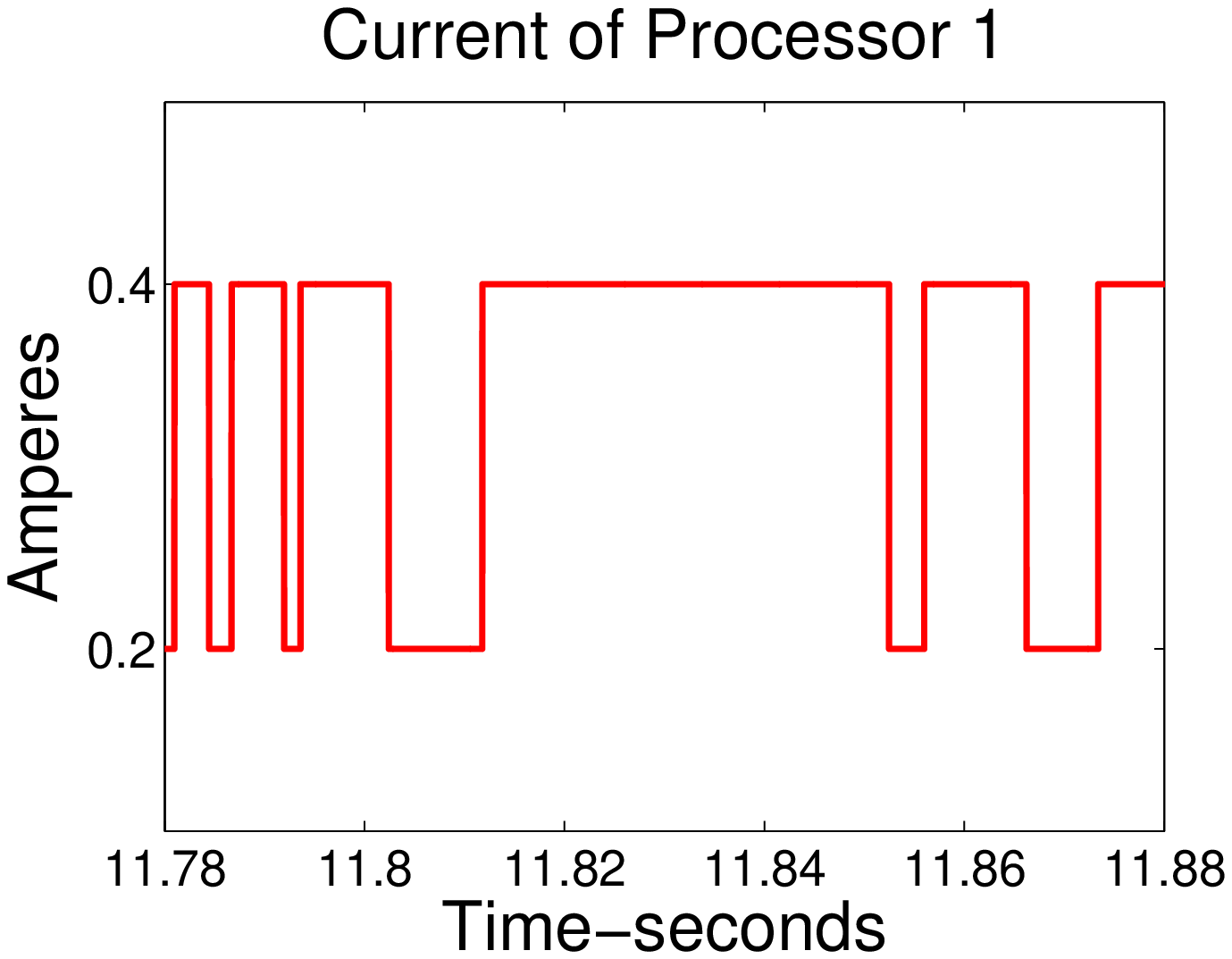}
      \label{fig:ip1}
}
\subfigure[Total current supplied by battery]{
      \includegraphics[width=0.40\textwidth]{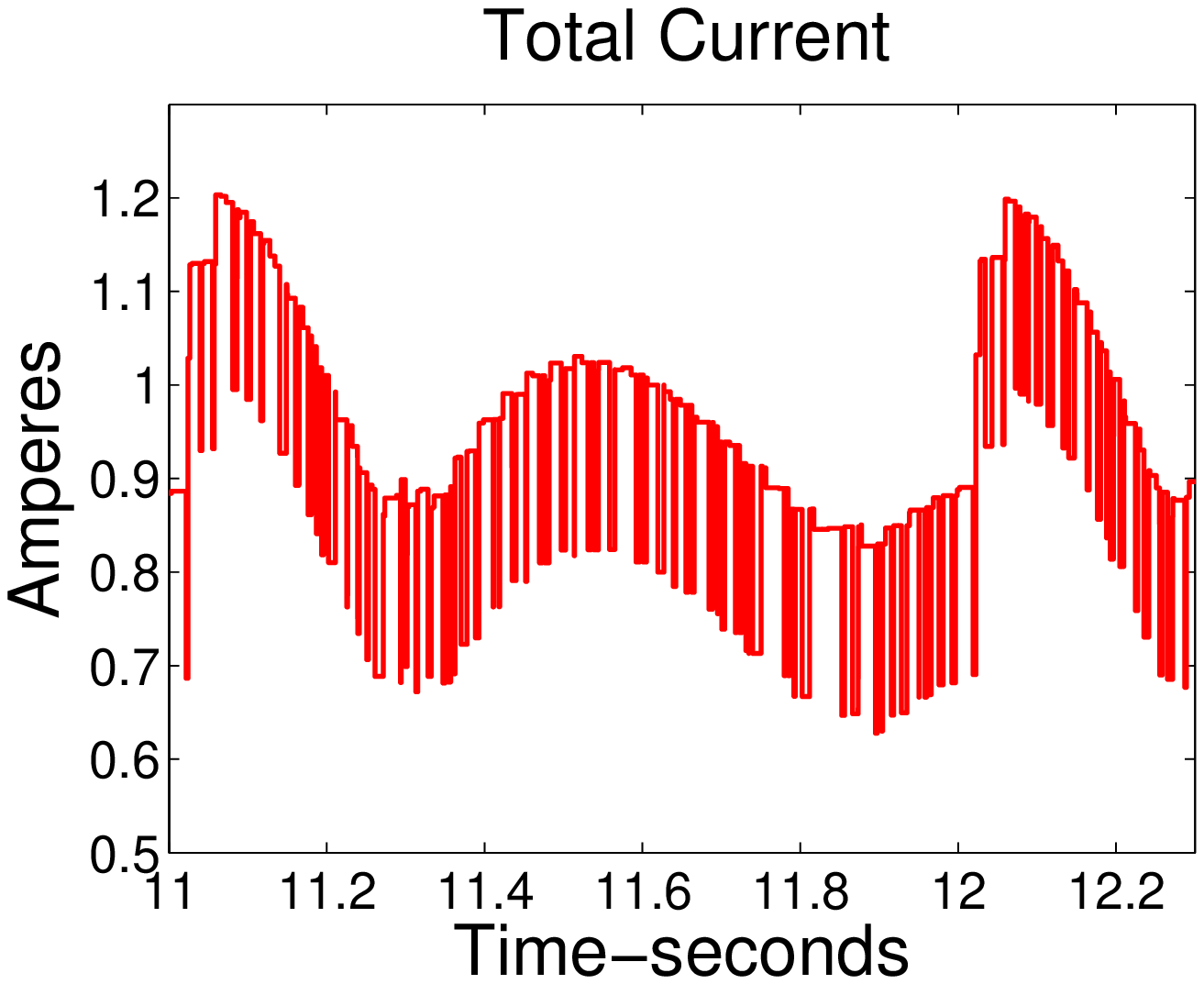}
      \label{totload}
}
\caption{Current supplied by the battery}
\end{figure}

\subsection{Robustness of real-time scheduling}\label{sec:timeuncertainty}

We demonstrate that the scheduling algorithm with a higher $B_R$ is more robust to the perturbations. Given the task set for the three pendulums with
\begin{align}
&[C^{\rm nom}_1(t), C^{\rm nom}_2(t), C^{\rm nom}_3(t)]=[4, 4, 4]{\rm ms} \cr
&[T^{\rm nom}_1(t), T^{\rm nom}_2(t), T^{\rm nom}_3(t)]=[15.4, 20.8, 30.3]{\rm ms},
\end{align}
Consider two different scheduling algorithms as the RMS algorithm and the EDF algorithm. When the tasks are scheduled under the RMS algorithm, we calculate the value of $B_R$ within $[10, 13]$s to be $8.8$ according to Definition \ref{df:BR}. When the tasks are scheduled under the EDF algorithm, we calculate the value of $B_R$ within $[10, 13]$s to be $11.4$. Since the system using the EDF algorithm has a higher measure of robustness as compared with the system using the RMS algorithm,  we conclude that the former is more robust to the perturbations considered. Indeed, under the same perturbation $\mathcal{E}(t)$, our dynamic schedulability test has confirmed that the real-time task set under the EDF algorithm is still schedulable, but is not schedulable under the RMS  algorithm.

\subsection{Robustness of the battery switching strategies}
 We compare the results from the three battery switching algorithms: Voltage Thresholding (VT),  Capacity Thresholding (CT), and Adaptive Thresholding (AT). We perform two tests comparing the behaviors of the three battery switching algorithms.

{\it Test 1:} We assume that the battery supplies the controller and the three pendulums.  Unexpected perturbations in load currents happen due to the loss of schedulability in the control tasks caused by the unexpected perturbation $\mathcal{E}(t)$ that makes  certain pendulums fail to receive updated control signals for a short period of time. To regain control a large motor current needs to be supplied, thus causing a sudden drop in the terminal voltage of the battery.

{\it Test 2:} We assume that the battery supplies different constant loads for an entire cycle (charge-discharge) of operation as the
SoC of the
battery varies. Such a test allows us to test the performance of the battery switching algorithms when dealing with a battery subjected to smooth loads of varying magnitude.

For each battery switching algorithm used in a particular test,  we simulate ten charge-discharge cycles on a 275mAh battery. After each cycle we assume that a certain amount of capacity loss occurs i.e. the value of $f_2$ decreases. We assume $f_2$ takes the values $[1,0.9,0.8,\cdots,0.1]$ over the ten cycles.

For VT we set the following criteria. A successful failure detection occurs when the terminal voltage $V\leq3.5$ volts and the estimated
SoC $\hat{x}_1\leq10\%$. A false alarm occurs if the voltage $V\leq3.5$ volts when $\hat{x}_1>10\%$. The false alarm happens when the algorithm attempts to switch out the battery on observing a temporary disturbance in load current even though the value of SoC is still larger than $10\%$.

For CT the following criteria are used. A false alarm occurs when $\hat{x}_1\leq 10\%$ and $V>3.6$ volts. This indicates that the algorithm is switching a battery out due to a perceived drop in the
SoC although the terminal voltage is approximately $2.8\%$ higher than the voltage threshold used in the previous test. The algorithm misses a fault if $\hat{x}_1\leq 10\%$ and the battery terminal voltage has fallen by $33\%$ or more from its initial no load value when $f_2=1$ and $x_1=1$.

For AT we use criteria similar to CT. A false alarm is recorded if the terminal voltage of the battery at the instant of switching is higher than $3.6$ volts. The algorithm misses a fault if the battery terminal voltage at the switching time instant has fallen by $33\%$ or more from its initial no load value when $f_2=1$ and $x_1=1$.

The test results are shown in the tables of Figure \ref{tb1}.
The total number of simulation runs per test are $T=10$. Let $H$, $F$ and $M$ be the number of successfully detected  faults, false alarms, and missed detections respectively. Note that $T=H+F+M$.  The fault detection rate (DR), false alarm rate (FAR) and the missed detection rate (MDR) are defined as $H/T$, $F/T$ and $M/T$ respectively, and $DR+FAR+MDR=1$.

\begin{figure}[tp]
\centering
\subfigure[Test 1 - results]{
  \begin{tabular}{ |c || c | c | c|} \hline \label{test1results}
    \textbf{Algorithm type} & \textbf{DR} & \textbf{FAR} & \textbf{MDR}\\ \hline \hline
    \textbf{VT} & 40\% & 60\% & 0\% \\ \hline
    \textbf{CT} & 100\% & 0\% & 0\% \\ \hline
    \textbf{AT} & 100\% & 0\% & 0\% \\
    \hline
  \end{tabular}
}
\subfigure[Test 2 - results]{
  \begin{tabular}{ ||c | c | c|} \hline \label{test2results}
    \textbf{DR} & \textbf{FAR} & \textbf{MDR}\\ \hline \hline
    50\% & 50\% & 0\% \\ \hline
    70\% & 30\% & 0\% \\ \hline
    100\% & 0\% & 0\% \\
    \hline
  \end{tabular}
}
\caption[Caption for subtables]{Battery switching algorithm test results}\label{tb1}
\end{figure}

It appears that none of the algorithms miss a fault, i.e. all of them ultimately disconnect a dying battery out of service before the terminal voltage falls below the criteria we set. VT produces false alarms six out of ten times in the presence of disturbances as shown in Figure \ref{tb1}(a). Even for smooth loads,VT produces five false alarms in ten trials as a result of changes in $f_2$ as shown in \ref{tb1}(b). It appears that CT performs well in the presence of disturbances as it produces no false alarms, however it produces three false alarms in ten trials when $f_2$ changes.  AT produces no false alarms in any case. It out-performs VT and CT in these tests.

\section{Conclusions} \label{con1}
This paper follows an analytical approach to establish  notions of robustness for real-time task scheduling algorithms and battery management algorithms. Combined with existing analytical results for robustness of control systems, our results provide a unified theoretical foundation for robustness of CPSb measured by the maximum tolerable perturbations in timing and battery capacity. Our results allow the entire system  to be analyzed using the dynamic schedulability test, battery stability test and the stability test for feedback controllers.

\bibliographystyle{acm}
\bibliography{Outline}

\end{document}